\documentclass[journal,twoside]{IEEEtran}
\usepackage{cite}
\usepackage{amsmath,amssymb,amsfonts,mathtools,bm,mathrsfs}
\usepackage{graphicx}
\usepackage{subfig}
\usepackage{hyperref}
\hypersetup{hidelinks=true}
\usepackage{textcomp}
\usepackage{microtype}

\newtheorem{theorem}{Theorem}
\newtheorem{lemma}{Lemma}
\newtheorem{proposition}{Proposition}

\newtheorem{remark}{Remark}

\newcommand{\E}{\mathbb{E}}

\newcommand{\norm}[1]{\left\lVert #1 \right\rVert}
\newcommand{\abs}[1]{\left\lvert #1 \right\rvert}

\newcommand{\eps}{\varepsilon}

\newcommand{\Rcal}{\mathcal{R}}

\newcommand{\Scal}{\mathcal{S}}

\newcommand{\Kuu}{K^{uu}}
\newcommand{\Kuv}{K^{uv}}
\newcommand{\Kvu}{K^{vu}}
\newcommand{\Kvv}{K^{vv}}
\newcommand{\dd}{\mathrm{d}}

\newcommand{\cL}{\mathcal L}
\newcommand{\e}{\mathrm{e}}

\newenvironment{proof}{\begin{IEEEproof}}{\end{IEEEproof}}

\begin{document}
\title{Robust Stabilization of Linear Markov-Jumping Hyperbolic PDEs with Boundary Input Delay}
\author{Yihuai Zhang, Yidan Cao, Huan Yu, and Lu Liu
\thanks{This work was supported by the Research Grants Council of the Hong
Kong Special Administrative Region of China under Project CityU/11212225. (Corresponding author: Lu Liu.)}
\thanks{Yihuai Zhang, Yidan Cao, and Lu Liu are with the Department of Mechanical Engineering, City University of Hong Kong. (e-mail: yihuai.zhang@cityu.edu.hk, yidancao4-c@my.cityu.edu.hk, luliu45@cityu.edu.hk).}
\thanks{Huan Yu is with the Thrust of Intelligent Transportation, Hong Kong University of Science and Technology (Guangzhou). (e-mail: huanyu@hkust-gz.edu.cn).}
}

\maketitle
\begin{abstract}
This paper studies the robust stabilization of 2 $\times$ 2 linear hyperbolic partial differential equations (PDEs) with Markov-jumping parameters and boundary input delay. The main challenge arises from the simultaneous presence of stochastic parameter variations and input delay, which complicates  {both} the stability analysis and controller design. To address this issue, a nominal  {delay-compensating} backstepping controller is first designed for a fixed nominal system. Applying the nominal transformation to the stochastic system yields a target system with additional perturbation terms induced by parameter mismatch. A mode-independent Lyapunov functional is then constructed to establish  {a pathwise exponential estimate, which directly implies} mean-square exponential stability under an explicit small-mismatch condition. The proposed analysis provides a direct robustness certificate for nominal delay compensation without using mode-dependent Lyapunov functionals. Finally, we present simulation results  {and discuss how the conservative small-mismatch condition should be interpreted for the numerical example}.
\end{abstract}

\begin{IEEEkeywords}
Partial differential equations~(PDEs), Backstepping, Markov-jumping parameters, Input delay, Mean-square exponential stability.
\end{IEEEkeywords}

\section{Introduction}
\label{sec:introduction}
\IEEEPARstart{H}{yperbolic} partial differential equations (PDEs) are widely  {used in} engineering  {applications} such as oil drilling~\cite{wang2020delay}, traffic flow~\cite{yu2022traffic},  {and} gas pipelines~\cite{bastin2016stability}. For this class of systems, boundary control is especially relevant because the control action often enters through the boundary. Over the past decades, the backstepping method has become one of the most effective tools for the stabilization of linear hyperbolic PDEs,  {because it transforms the original system into} an exponentially stable target system  {that can be analyzed using} Lyapunov  {methods}~\cite{krstic2008boundary}.

In many applications, however, the model coefficients are not deterministic. Transport speeds, source terms, and boundary couplings may vary because of uncertain operating conditions, disturbances, or abrupt mode changes. A convenient and mathematically tractable representation of such uncertainty is to model the time-varying coefficients as finite-state Markov-jumping processes. This leads to stochastic hyperbolic systems whose analysis must account not only for the distributed dynamics but also for random switching among different modes. Stochastic linear hyperbolic PDEs  {have} been widely investigated~\cite{amin2011exponential,bolzern2006almost,do2012continuous,lamare2015switching,prieur2014stability}. 
In~\cite{wang2012stochastically}, the authors investigated the robust stochastically exponential stability and stabilization of uncertain linear hyperbolic PDEs with Markov-jumping parameters by using linear matrix inequalities~(LMIs). Prieur~\cite{prieur2014stability} analyzed changes  {in the} boundary conditions and derived sufficient conditions for the exponential stability of the switching system. LMIs were then applied to obtain the sufficient conditions for stochastic stabilization of traffic flow described by hyperbolic PDEs~\cite{zhang2017stochastic}. Furthermore, Zhao~\cite{zhao2020boundary} investigated the output feedback stabilization of PDE-ODE cascade systems with stochastic jumps. By adopting the backstepping method, Auriol~\cite{auriol2023mean} demonstrated the mean-square exponential stability of coupled hyperbolic systems with Markov-jumping parameters, where a mode-dependent Lyapunov functional  {was} constructed. Then  {this approach} was extended to the mixed-autonomy traffic system in~\cite{zhang2023mean},  {and to} the use of neural operators~\cite{ZhangOperatorLearning2026}.

In addition, another feature that frequently appears in practice is the actuation delay. Boundary control commands may be delayed by computation, communication, sensing, or actuator dynamics. A standard and effective way to handle this issue is to rewrite the delay as a transport equation and to analyze the resulting augmented system~\cite{krstic2008backstepping}. This approach eliminates the explicit delayed argument from the boundary condition and places the delay entirely into an auxiliary transport state, which can then be treated together with the plant in a cascade framework. Auriol~\cite{auriol2018delay} studied the delay-robust control of hyperbolic PDEs  {and} derived the trade-off between the boundary reflection margin and finite-time convergence. The authors in~\cite{kong2022prediction} considered the stochastic input delay of a linear time-invariant system and derived sufficient conditions for mean-square exponential stability. For the case of hyperbolic PDEs, Zhang~\cite{ZhangRobustStabilization2023} investigated the robust stabilization of linear hyperbolic PDEs with uncertain boundary input delay using the backstepping method.  {Stochastic} input delay for parabolic PDEs was investigated in~\cite{guan2024robustness}. Despite the significant progress in the stabilization of hyperbolic PDEs with either Markov-jumping parameters or input delay, the problem of robust stabilization of linear hyperbolic PDEs with both Markov-jumping parameters and boundary input delay has not been addressed. The simultaneous presence of stochastic parameter variations and input delay introduces new challenges in the stability analysis and controller design. This motivates the current work,  {which develops} a backstepping-based approach for the robust stabilization of linear hyperbolic PDEs with both Markov-jumping parameters and boundary input delay.

\textbf{Contributions}. The contribution of this paper is twofold. First, we derive the target system obtained by applying a nominal  {delay-compensating} backstepping transformation to a Markov-jumping hyperbolic system with boundary input delay  {and prove the well-posedness of the closed-loop solution generated by the delayed boundary feedback}. Second, we construct a common Lyapunov functional for the augmented target system and obtain a mode-independent small-mismatch condition for  {pathwise, and hence} mean-square exponential stability. Unlike mode-dependent Lyapunov approaches, the proposed analysis avoids Markov jump terms in the generator, at the cost of a conservative uniform mismatch bound.  {It gives a robust common-Lyapunov certificate for Markov-jumping hyperbolic PDEs under a nominal delay-compensating controller.}

The paper is organized as follows. Section~\ref{sec:problemstatement} formulates the problem, including the stochastic system and the nominal  {delay-compensating} controller design. Section~\ref{sec:StochasticSystem} derives the stochastic target system under the nominal backstepping transformation. Section~\ref{sec:LyapunovAnalysis} constructs a mode-independent Lyapunov functional and establishes mean-square exponential stability of the closed-loop system. Section~\ref{sec:simulation} gives simulation results to illustrate the effectiveness of the proposed approach  {and reports the mismatch level used in the example}. Finally, Section~\ref{sec:conclusion} concludes the paper and outlines future research directions.

\section{Problem statement}
\label{sec:problemstatement}
We consider a  $2 \times 2$  linear Markov-jumping hyperbolic system with input delay
\begin{align}
 \partial_t u(x, t)+\lambda(t) \partial_x u(x, t)&=\sigma^{+}(t) v(x, t),\label{eq:stosys1}\\
 \partial_t v(x, t)-\mu(t) \partial_x v(x, t)&=\sigma^{-}(t) u(x, t),
\end{align}
with boundary conditions 
\begin{align}
 u(0,t)&=\phi(t) v(0, t), \\
 v(1,t)&=U(t-D),\label{eq:stosys4}
\end{align}
where the spatial and time variables $(x,t)$ belong to $\{[0,1]\times \mathbb{R}^+ \}$. $D>0$ denotes a constant and known input delay. The input history is initialized as $U(h) = 0$ for $h \in [-D,0]$.
The stochastic characteristic speeds $\lambda(t)>0$ and $\mu(t)>0$ are time-varying. The in-domain couplings $\sigma^{+}(t)$, $\sigma^{-}(t)$ and the boundary coupling $\phi(t)$ are also stochastic. The stochastic coefficients are modeled through a single joint finite-state Markov process. Let $\theta(t)$ be a right-continuous Markov process with finite state space $\Rcal:=\{1,\ldots,r\}$.  For \(0\le s\le t\), define the transition probabilities
$P_{ij}(s,t):=\mathbb P\{\theta(t)=j\mid \theta(s)=i\}, i,j\in\Rcal.$
They satisfy the forward Kolmogorov equation~\cite{kolmanovsky2001mean,hoyland2009system,ross2014introduction}
\begin{align}
    \frac{\partial}{\partial t}P_{ij}(s,t) =
\sum_{k\ne j}P_{ik}(s,t)\tau_{kj}(t) - P_{ij}(s,t)\sum_{\ell\ne j}\tau_{j\ell}(t),
\end{align}
with \(P_{ii}(s,s)=1\) and \(P_{ij}(s,s)=0\) for \(i\ne j\).
Here \(\tau_{ij}(t)\ge 0\), \(i\ne j\), denotes the transition rate from
mode \(i\) to mode \(j\), and \(\tau_{ii}(t)=0\). We assume that the
transition rates are bounded, namely there exists
\(\tau^\star>0\) such that $\sum_{\ell\ne i}\tau_{i\ell}(t)\le \tau^\star, \forall i\in\Rcal, \forall t\ge 0$.
Hence the Markov process is non-explosive.

The Markov-jump parameter vector is defined by
\begin{align}
    \delta(t):=\delta_{\theta(t)},
\delta_j=(\lambda_j,\mu_j,\sigma_j^+,\sigma_j^-,\phi_j), j\in\Rcal.
\end{align}
Equivalently, when \(\theta(t)=j\), the active coefficients in
\eqref{eq:stosys1}-\eqref{eq:stosys4} are $\lambda(t)=\lambda_j,\mu(t)=\mu_j, \sigma^+(t)=\sigma_j^+, \sigma^-(t)=\sigma_j^-,
\phi(t)=\phi_j$.
We assume that the characteristic speeds are uniformly positive and bounded, i.e., $0<\underline{\lambda}\le \lambda_j\le \bar{\lambda}, 0<\underline{\mu}\le \mu_j\le \bar{\mu}, j\in\Rcal$, and the coupling coefficients and boundary coefficient are also uniformly bounded $\underline{\phi}\leq \phi_j \leq \bar{\phi}$, $\underline{\sigma}^\pm \leq\sigma_j^\pm \leq \bar{\sigma}^\pm, j\in\Rcal$.

For the nominal system used in the controller design, fix
\begin{align}
    \delta_0=(\lambda_0,\mu_0,\sigma_0^+,\sigma_0^-,\phi_0),
\end{align}
where \(\lambda_0>0\), \(\mu_0>0\), and \(\phi_0\ne 0\).
The nominal vector \(\delta_0\) may be chosen as one of the Markov modes, but this is not required. Let
\begin{align}
    \mathcal S:=\{\lambda,\mu,\sigma^+,\sigma^-,\phi\}.
\end{align}
For each \(X\in\mathcal S\), denote by \(X_j\) the \(X\)-component of
the joint mode \(\delta_j\), and by \(X_0\) the corresponding component
of \(\delta_0\). The mismatch between mode \(j\) and the nominal plant is
defined as
\begin{align}\label{eq:modemismatch}
    \Delta_j:=\sum_{X\in\Scal}\abs{X_j-X_0}, j \in \Rcal.
\end{align}
We also define the worst-case mismatch
\begin{align}\label{eq:Delta_max}
    \Delta_{\max}:=\max_{j\in\Rcal}\Delta_j.
\end{align}
This joint-chain formulation includes independent componentwise Markov chains as a special case, by taking \(\Rcal\) to be the Cartesian product of the component state spaces. In the present work, however, no independence among the parameter components is required.

We first design the  {delay-compensating} controller for the fixed nominal system associated with \(\delta_0\). The nominal system with input delay is written as
\begin{align}
    \partial_t u_{\text{nom}}(x,t)+\lambda_0\partial_x u_{\text{nom}}(x,t)&=\sigma_0^+v_{\text{nom}}(x,t),\label{eq:nom_u}\\
    \partial_t v_{\text{nom}}(x,t)-\mu_0\partial_x v_{\text{nom}}(x,t)&=\sigma_0^-u_{\text{nom}}(x,t),\label{eq:nom_v}
\end{align}
with boundary conditions
\begin{align}
    u_{\text{nom}}(0,t)&=\phi_0v_{\text{nom}}(0,t),\label{eq:nom_left_bc}\\
    v_{\text{nom}}(1,t)&=U_{\text{nom}}(t-D).\label{eq:nom_right_bc}
\end{align}

By using a transport PDE to express the input delay, the system~\eqref{eq:nom_u}-\eqref{eq:nom_right_bc} is rewritten as follows
\begin{align}
    \partial_t u_{\text{nom}}(x,t)+\lambda_0\partial_x u_{\text{nom}}(x,t)&=\sigma_0^+v_{\text{nom}}(x,t),\label{eq:tsnom_u}\\
    \partial_t v_{\text{nom}}(x,t)-\mu_0\partial_x v_{\text{nom}}(x,t)&=\sigma_0^-u_{\text{nom}}(x,t),\label{eq:tsnom_v}\\
    u_{\text{nom}}(0,t)&=\phi_0v_{\text{nom}}(0,t),\label{eq:tsnom_left_bc}\\
    v_{\text{nom}}(1,t)&=\eta(0,t),\label{eq:tsnom_right_bc}\\
    D \partial_t \eta(x,t) &= \partial_x \eta(x,t),\label{eq:transport}\\
    \eta(1,t)&= U_{\text{nom}}(t). \label{eq:ts_right_bc}
\end{align}
Following the backstepping design, two Volterra transformations for the system states and the delay-channel state are applied to the delayed-nominal system~\eqref{eq:tsnom_u}-\eqref{eq:ts_right_bc}, we have
\begin{align}
    &\gamma_{\text{nom}}(x,t) = w_{\text{nom}}(x,t) - \int_0^x \mathbf{K}(x,\xi) w_{\text{nom}}(\xi,t) d\xi, \label{eq:plant_BS}\\
    &z(x,t) = \eta(x,t) - \int_0^x p(x-\xi) \eta(\xi,t) d\xi \nonumber\\
    &-\int_0^1 q_1(x,\xi) u_{\text{nom}}(\xi,t) d\xi - \int_0^1 q_2(x,\xi) v_{\text{nom}}(\xi,t) d\xi\label{eq:delay_BS},
\end{align}
where $\gamma(x,t) = (\alpha_{\text{nom}}(x,t), \beta_{\text{nom}}(x,t))^\top$, $w(x,t) = (u_{\text{nom}}(x,t), v_{\text{nom}}(x,t))^\top$. Here, the kernels $K^{\cdot\cdot}$ are defined on the triangular domain $\mathcal{T}_1 = \{(x,\xi): 0 \leq \xi \leq x \leq 1\}$ and satisfy the kernel equations which can be found in ~\cite{vazquez2011backstepping} and in its compact form as $\mathbf{K}(x,\xi)= \begin{bmatrix}
\Kuu(x,\xi) & \Kuv(x,\xi)\\
\Kvu(x,\xi) & \Kvv(x,\xi)
\end{bmatrix}$. $q_1$, and $q_2$ are kernels defined on the rectangular domain $\mathcal{T}_2 = \{0 \leq x, \xi \leq 1\}$. The kernel function $p(x)$ is defined on $x\in [0,1]$. Using the backstepping transformation, the target system is given by
\begin{align}
    \partial_t \alpha_{\text{nom}}(x,t)+\lambda_0\partial_x \alpha_{\text{nom}}(x,t)&=0, \label{eq:nomtarget1}\\
    \partial_t \beta_{\text{nom}}(x,t)-\mu_0\partial_x \beta_{\text{nom}}(x,t)&=0,\\
    \alpha_{\text{nom}}(0,t)&=\phi_0\beta_{\text{nom}}(0,t),\\
    \beta_{\text{nom}}(1,t)&=z(0,t),\\
    D \partial_t z(x,t) &= \partial_x z(x,t),\\
    z(1,t) &= 0.\label{eq:nomtarget6}
\end{align}
The nominal  {delay-compensating} controller is then given by
\begin{align}\label{eq:nomcontrol}
    U_{\text{nom}}(t)=&\int_0^1 p(1-\xi)\eta(\xi,t)\, d\xi
  +\int_0^1 q_1(1,\xi)u_{\text{nom}}(\xi,t)\, d\xi \nonumber\\
  &+\int_0^1 q_2(1,\xi)v_{\text{nom}}(\xi,t)\, d\xi.
\end{align}
The kernels $p$, $q_1$, and $q_2$ are determined by the backstepping design. The kernels $q_1$ and $q_2$ satisfy the following kernel equations
\begin{align}
    &\tfrac{1}{D} \partial_x q_1(x,\xi) - \lambda_0 \partial_\xi q_1(x,\xi) = \sigma_0^- q_2(x,\xi), \label{eq:q1_kernel}\\
    &\tfrac{1}{D} \partial_x q_2(x,\xi) + \mu_0 \partial_\xi q_2(x,\xi) = \sigma_0^+ q_1(x,\xi),\label{eq:q2_kernel}
\end{align}
with boundary conditions
\begin{align}
    q_1(x,1) &= 0,\\
    \frac{\mu_0}{\phi_0\lambda_0}q_2(x,0) &= q_1(x,0),\\
    q_1(0,\xi) &= \Kvu(1,\xi),\\
    q_2(0,\xi) &= \Kvv(1,\xi).
\end{align}
The kernel function $p(x)$ is obtained as
\begin{align}
    p(x)= D\mu_0 q_2(x,1).
\end{align}
Building on the results in~\cite{ZhangRobustStabilization2023}, the following proposition for the bound of the kernels can be obtained.
\begin{proposition}\label{thm:kernelbound}
The backstepping kernels $K^{\cdot\cdot}$, $q_1$, $q_2$, and $p$
appearing in the transformations \eqref{eq:plant_BS}--\eqref{eq:delay_BS} are well-defined and bounded.
Moreover, the first-order derivatives of the kernels that appear in
the perturbation coefficients are also bounded. More precisely, there
exist positive constants $\bar K$, $\bar K_1$, $\bar q$, $\bar q_\xi$,
and $\bar p$, depending only on the nominal parameters and on the
delay $D$, such that
\begin{align}
|K^{\cdot\cdot}(x,\xi)| &\le \bar K, \nonumber\\
|\partial_x K^{\cdot\cdot}(x,\xi)|+
|\partial_\xi K^{\cdot\cdot}(x,\xi)| &\le \bar K_1, (x,\xi)\in\mathcal T_1, \nonumber\\
|q_i(x,\xi)| &\le \bar q, i=1,2, \nonumber\\
|\partial_\xi q_i(x,\xi)| &\le \bar q_\xi, i=1,2,
(x,\xi)\in\mathcal T_2, \nonumber\\
|p(x)| &\le \bar p, x\in[0,1].
\end{align}
\end{proposition}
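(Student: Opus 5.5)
The bounds come in a cascade: first the plant kernels $\mathbf K$, then $(q_1,q_2)$, whose initial data are supplied by the traces $\Kvu(1,\cdot)$ and $\Kvv(1,\cdot)$, and finally $p$, which is a trace of $q_2$.

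\textbf{Step 1: the plant kernels.} The entries of $\mathbf K$ solve the standard $2\times 2$ Goursat-type kernel system of \cite{vazquez2011backstepping} with constant nominal coefficients $\lambda_0,\mu_0,\sigma_0^\pm,\phi_0$ on $\mathcal T_1$. I will quote that reference for well-posedness in $C^1(\mathcal T_1)$ when the data are smooth; here the data are constants, so they are trivially smooth. The argument there runs as follows. One integrates along characteristics to turn the system into Volterra-type integral equations. One then solves these by successive approximations $\mathbf K=\sum_n \Delta\mathbf K^n$, using an estimate of the form $|\Delta\mathbf K^n(x,\xi)|\le M\,C^n(x+\xi)^n/n!$. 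Summing the series gives $\bar K$. Differentiating the integral equations yields the analogous series for $\partial_x\mathbf K$ and $\partial_\xi\mathbf K$, whose data involve only the already bounded $\mathbf K$ and the constants. This gives $\bar K_1$. All of these constants depend only on $\delta_0$.

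\textbf{Step 2: the delay kernels $q_1,q_2$ on $\mathcal T_2$.} The system \eqref{eq:q1_kernel}--\eqref{eq:q2_kernel} is hyperbolic with $x$ playing the role of time. Its characteristic speeds are $-\lambda_0 D$ for $q_1$ and $+\mu_0 D$ for $q_2$, both as $\xi$-velocities per unit $x$. The data are:
\begin{itemize}
\item initial data at $x=0$, namely $\Kvu(1,\cdot)$ and $\Kvv(1,\cdot)$, which are $C^1$ and bounded by Step 1;
\item the boundary condition $q_1(x,1)=0$ at the inflow boundary of $q_1$;
\item the reflection law $q_1(x,0)=\tfrac{\mu_0}{\phi_0\lambda_0}q_2(x,0)$ at $\xi=0$, which is the inflow boundary of $q_2$. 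This is well defined because $\phi_0\ne0$.
\end{itemize}
This is exactly a $2\times2$ linear hyperbolic IBVP on the finite "time" interval $x\in[0,1]$. I will write it along characteristics as a fixed-point problem $q=\mathcal F q$, where $\mathcal F$ consists of the data terms plus integrals of $\sigma_0^\mp q$. In the sup norm weighted by $e^{-Lx}$, $\mathcal F$ is a contraction for large $L$. This yields existence, uniqueness and a bound $\bar q$ proportional to $\bar K$ times constants depending on $D$, $\lambda_0$, $\mu_0$, $\sigma_0^\pm$, $\phi_0$.

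An alternative is to use the explicit formula from \cite{ZhangRobustStabilization2023}, where $q_1,q_2$ are expressed through the nominal plant kernels evaluated along a shifted delay variable, so the bound is inherited directly. I will cite that reference for this representation and only sketch the contraction argument as a check.

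For $\partial_\xi q_i$, I will differentiate the characteristic representation in $\xi$, or equivalently apply the same contraction argument to $(\partial_\xi q_1,\partial_\xi q_2)$. That system has the same structure, with data $\partial_\xi K^{v\cdot}(1,\cdot)$ bounded by $\bar K_1$ and boundary data obtained by differentiating the boundary relations using the PDEs. This gives $\bar q_\xi$.

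\textbf{Step 3: $p$.} Since $p(x)=D\mu_0 q_2(x,1)$, we immediately get $\bar p=D\mu_0\bar q$.

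\textbf{Main obstacle.} The hard part is the $C^1$ regularity in Step 2. Along the reflection at $\xi=0$, and where characteristics from the corner $(0,0)$ or $(0,1)$ propagate, the derivatives may jump unless the compatibility conditions hold. These conditions relate $\Kvu(1,0)$, $\Kvv(1,0)$, $\Kvu(1,1)$, and $q_1(0,1)=0$.

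I would first check them using the boundary conditions of the $\mathbf K$-kernel equations. If they fail, I would settle for piecewise $C^1$ kernels with uniformly bounded one-sided derivatives. That is all the later Lyapunov estimates need, since only $|\partial_\xi q_i|\le\bar q_\xi$ almost everywhere enters the perturbation terms.
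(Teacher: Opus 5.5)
Your cascade $\mathbf K\to(q_1,q_2)\to p$, with $x$ as the evolution variable of the $q$-system, is the same as the paper's; the paper's proof only cites standard kernel well-posedness and asserts the bounds. You go further by raising the corner-compatibility question, and that is the gap: the condition fails, already at order zero. The diagonal condition of the $\mathbf K$-system gives $\Kvu(x,x)=-\sigma_0^-/(\lambda_0+\mu_0)$, which is exactly what produces the formula for $g_{1j}$. Hence $q_1(0,1)=\Kvu(1,1)\neq0=q_1(x,1)$ whenever $\sigma_0^-\neq0$, as in the simulation. At $(0,0)$, by contrast, $\Kvv(x,0)=\frac{\lambda_0\phi_0}{\mu_0}\Kvu(x,0)$ makes the data compatible.

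Because the $q_1$-equation has no self-coupling, $q_1$ carries a jump of constant size $\abs{\sigma_0^-}/(\lambda_0+\mu_0)$ across $\xi=1-D\lambda_0x$. If $D\lambda_0>1$, the reflection $q_2(x,0)=\frac{\phi_0\lambda_0}{\mu_0}q_1(x,0)$ passes a jump to $q_2$ along $\xi=D\mu_0(x-\frac1{D\lambda_0})$. Then $p=D\mu_0q_2(\cdot,1)$ jumps as well; for $D=3$ and $\lambda_0=\mu_0=1$ this occurs at $x=2/3$.

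The bounds $\bar K,\bar q,\bar p$ survive. Your characteristic fixed point works in the sup norm on piecewise continuous functions, provided you first substitute the $q_1$-representation into the reflected trace. This is possible because the boundary coupling is one-way. Without that substitution, the reflection gain $\abs{\phi_0}\lambda_0/\mu_0$, which may exceed one, is not contracted near $\xi=0$.

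What you do not get is $\abs{\partial_\xi q_i}\le\bar q_\xi$ on $\mathcal T_2$. Your derivative contraction runs on incompatible data. The distributional $\xi$-derivative of $q_1(x,\cdot)$ contains a Dirac mass at $\xi^*(x)=1-D\lambda_0x$, and your argument only controls the piecewise derivative.

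Your fallback claim, that only an a.e.\ bound enters the perturbation terms, is also wrong. The coefficients $h_{3j}$ and $h_{4j}$ come from integrating $\int_0^1q_1u_\xi\,\dd\xi$ and $\int_0^1q_2v_\xi\,\dd\xi$ by parts in $\xi$. Across the jump this produces an extra term $D(\lambda_0-\lambda_j)[q_1]\,u(\xi^*(x),t)$ in $\mathcal H_j$, and analogously a term with $\mu_0-\mu_j$, $[q_2]$ and $v$. This term cancels against the moving-jump part of $\partial_xq_1$ only in the nominal mode. It is a pointwise trace of $u$, so it cannot be written as $\int_0^1h_{3j}u\,\dd\xi$ with $\abs{h_{3j}}\le M_0\Delta_j$.

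The term is still controllable. After pairing with $\e^{\rho_zDx}z$ in the Lyapunov derivative, the change of variables $\xi=1-D\lambda_0x$ gives a bound of order $\Delta_j(D\lambda_0)^{-1/2}\norm{z}_{L^2}\norm{u}_{L^2}$. So what can actually be proved is: bounded kernels that are $C^1$ with bounded derivatives off finitely many characteristic lines, plus explicit jump terms in $\mathcal H_j$ estimated separately. Neither your argument nor the paper's citation yields the derivative bound as stated.
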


\begin{proof}
The boundedness of backstepping kernels
$K^{\cdot\cdot}$ and of their first-order derivatives follows from the standard well-posedness and regularity results for Volterra kernel equations associated with $2 \times 2$ hyperbolic systems~\cite{ZhangRobustStabilization2023,CoronLocal2013}. Similarly, the kernels $q_1$ and $q_2$ are solutions of a
linear hyperbolic kernel system on the rectangle domain
$\mathcal T_2$, with boundary conditions determined by the bounded kernels $K^{vu}$ and $K^{vv}$. Hence $q_1$, $q_2$, and the derivatives $\partial_\xi q_1$, $\partial_\xi q_2$ are uniformly bounded on $\mathcal T_2$. Finally, since $p(x)=D\mu_0 q_2(x,1)$,
the boundedness of $p$ follows immediately from the boundedness of $q_2$. This completes the proof of Proposition~\ref{thm:kernelbound}.
\end{proof}

\begin{proposition}\label{thm:nominal_finite_time}
Assume the kernels of Proposition~\ref{thm:kernelbound} are used in the controller \eqref{eq:nomcontrol}. The nominal system~\eqref{eq:nom_u}-\eqref{eq:nom_right_bc} with initial conditions $u_{\text{nom}}^0$, $v_{\text{nom}}^0$, $\eta_{\text{nom}}^0 \in L^2([0,1])^3$ is finite-time stable. More precisely, letting
\begin{equation}\label{eq:Tstar_nominal}
    t_f:=D+\frac{1}{\mu_0}+\frac{1}{\lambda_0},
\end{equation}
then the system reaches the zero state in finite time $t_f$.
\end{proposition}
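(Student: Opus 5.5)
The plan is to prove finite-time convergence on the target system \eqref{eq:nomtarget1}--\eqref{eq:nomtarget6} and then pull it back through the invertible transformations \eqref{eq:plant_BS}--\eqref{eq:delay_BS}.

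\textbf{Step 1: invertibility.} The map $(u_{\text{nom}},v_{\text{nom}},\eta)\mapsto(\alpha_{\text{nom}},\beta_{\text{nom}},z)$ is block lower-triangular. The first block \eqref{eq:plant_BS} is a Volterra operator with bounded kernel $\mathbf K$, by Proposition~\ref{thm:kernelbound}. Hence it is boundedly invertible on $L^2([0,1])^2$, with inverse $w_{\text{nom}}=\gamma_{\text{nom}}+\int_0^x \mathbf L(x,\xi)\gamma_{\text{nom}}(\xi)\,d\xi$ for some bounded kernel $\mathbf L$.

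The second block \eqref{eq:delay_BS} is a Volterra operator in $\eta$ with convolution kernel $p$, plus bounded Fredholm terms in $(u_{\text{nom}},v_{\text{nom}})$. Substituting the inverse of the first block, $\eta$ is recovered from $z$ and $\gamma_{\text{nom}}$ by inverting $I-\int_0^x p(x-\xi)\,\cdot\,d\xi$, which is a Volterra operator and therefore invertible. So the full map is an isomorphism of $L^2([0,1])^3$. In particular, the original state vanishes identically exactly when the target state does.

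\textbf{Step 2: finite-time extinction of the target system.} Use the method of characteristics.
\begin{itemize}
\item The transport equation $D\partial_t z=\partial_x z$ with $z(1,t)=0$ moves information leftward at speed $1/D$. Hence $z(x,t)=z^0(x+t/D)$ for $x+t/D\le 1$ and $z(x,t)=0$ otherwise, so $z\equiv0$ for $t\ge D$.
\item The boundary value $\beta_{\text{nom}}(1,t)=z(0,t)$ vanishes for $t\ge D$. Since $\beta_{\text{nom}}$ travels leftward at speed $\mu_0$ with no source, $\beta_{\text{nom}}(x,t)$ equals either its initial datum along a characteristic or a past boundary value. Thus $\beta_{\text{nom}}\equiv0$ for $t\ge D+1/\mu_0$.
\item Then $\alpha_{\text{nom}}(0,t)=\phi_0\beta_{\text{nom}}(0,t)=0$ for $t\ge D+1/\mu_0$. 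Since $\alpha_{\text{nom}}$ travels rightward at speed $\lambda_0$, $\alpha_{\text{nom}}\equiv0$ for $t\ge t_f$, with $t_f$ as in \eqref{eq:Tstar_nominal}.
\end{itemize}
Because the data are only in $L^2$, I will phrase these identities in the almost-everywhere sense, for the explicit characteristic solution formulas.

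\textbf{Step 3: equivalence of the closed loops.} The closed loop \eqref{eq:tsnom_u}--\eqref{eq:ts_right_bc} with controller \eqref{eq:nomcontrol} must map to the target system. This is where the kernel equations enter, and I expect it to be the main obstacle. I would rely on the standard computation from \cite{vazquez2011backstepping,ZhangRobustStabilization2023}:
\begin{itemize}
\item Differentiate \eqref{eq:delay_BS} in $t$ and $x$, and integrate by parts.
\item Use \eqref{eq:q1_kernel}--\eqref{eq:q2_kernel} and the boundary conditions on $q_i$ to cancel the $u_{\text{nom}},v_{\text{nom}}$ terms. Note that $q_1(x,1)=0$ and $p(x)=D\mu_0q_2(x,1)$ kill the boundary terms at $\xi=1$.
\item The conditions $q_1(0,\cdot)=\Kvu(1,\cdot)$ and $q_2(0,\cdot)=\Kvv(1,\cdot)$ give $z(0,t)=\beta_{\text{nom}}(1,t)$.
\item Evaluating \eqref{eq:delay_BS} at $x=1$ and using \eqref{eq:nomcontrol} gives $z(1,t)=0$.
\end{itemize}
For $L^2$ initial data, I would carry out this computation for smooth compatible data. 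I would then extend it by density, using the well-posedness of both systems and the boundedness of the transformation from Step 1.

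Combining Steps 1--3 gives $(u_{\text{nom}},v_{\text{nom}},\eta)(\cdot,t)=0$ for all $t\ge t_f$.
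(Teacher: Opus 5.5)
Your proposal is correct and takes the same route as the paper. The paper's proof is exactly your Step~2: it propagates along characteristics to get $z\equiv 0$ for $t\ge D$, then $\beta_{\text{nom}}\equiv 0$ for $t\ge D+1/\mu_0$, then $\alpha_{\text{nom}}\equiv 0$ for $t\ge t_f$. Your Steps~1 and~3 (invertibility of the transformation, and the check that the closed loop maps to the target system) only make explicit what the paper takes for granted from the backstepping design.
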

\begin{proof}
In the scaled formulation the actuator state satisfies $z_t-\frac{1}{D}z_x=0$ with boundary condition $z(1,t)=0$. The associated characteristic curves satisfy $x(t)=1-(t-t_0)/D$, so the characteristic formula gives $z(x,t)=0$ for all $t\ge D(1-s)$. In particular, $z(0,t)=0$ for all $t\ge D$. For such times the boundary condition for $\beta$ becomes $\beta(1,t)=0$. Since $\beta_t-\mu_0\beta_x=0$, characteristic propagation implies $\beta(\cdot,t)\equiv0$ for all $t\ge D+1/\mu_0$. The left boundary condition of $\alpha$ is then $\alpha(0,t)=\phi_0\beta(0,t)=0$, and $\alpha_t+\lambda_0\alpha_x=0$ gives $\alpha(\cdot,t)\equiv0$ for all $t\ge D+1/\mu_0+1/\lambda_0$. Hence \eqref{eq:Tstar_nominal} holds. This finishes the proof of Proposition~\ref{thm:nominal_finite_time}.
\end{proof}

Now we first state the well-posedness of the closed-loop system with Markov-jumping parameters and input delay under the nominal  {delay-compensating} controller, we have the following lemma.
\begin{lemma}\label{lem:wellposedness}
    For any initial conditions of the stochastic system~\eqref{eq:stosys1}-\eqref{eq:stosys4} with $u^0$, $v^0, \eta^0 \in L^2([0,1])^3$, the closed-loop system with the nominal  {delay-compensating} controller~\eqref{eq:nomcontrol} is well-posed in the sense that there exists a unique solution $(u,v,\eta)$ such that for any initial mode $\delta(0)$ and $t \geq 0$, 
    \begin{align}~\label{eq:wellposedness}
        \E_{[0,(u^0, v^0,\eta^0,\delta(0))]}[\norm{u(\cdot,t),v(\cdot,t),\eta(\cdot,t)}_{L^2}^2] < \infty.
    \end{align}
\end{lemma}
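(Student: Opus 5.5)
The plan is a pathwise construction followed by a uniform deterministic estimate. Fix a realization of the Markov process $\theta$. Because the transition rates are bounded by $\tau^\star$, almost surely $\theta$ has finitely many jumps on every bounded interval. Write $0=t_0<t_1<t_2<\cdots$ for the jump times, with $t_k\to\infty$. On each interval $[t_k,t_{k+1})$ the coefficients are frozen at some mode $\delta_j$. The closed loop is then the deterministic linear hyperbolic system
\begin{align*}
\partial_t u+\lambda_j\partial_x u=\sigma_j^+v,\qquad \partial_t v-\mu_j\partial_x v=\sigma_j^-u,
\end{align*}
with boundary conditions $u(0,t)=\phi_j v(0,t)$ and $v(1,t)=\eta(0,t)$, together with the transport equation $D\partial_t\eta=\partial_x\eta$. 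The actuator boundary $\eta(1,t)=U_{\text{nom}}(t)$ is given by \eqref{eq:nomcontrol}, a bounded linear functional of $(u,v,\eta)(\cdot,t)$ because of Proposition~\ref{thm:kernelbound}.

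On each such interval I will show that this frozen-mode system generates a $C_0$-semigroup on $H:=L^2([0,1])^3$, with growth bound independent of $j$. I would write the generator as $\mathcal A_j=\mathcal A_j^0+\mathcal B_j$. Here $\mathcal B_j$ is the bounded coupling $(u,v)\mapsto(\sigma_j^+v,\sigma_j^-u)$. The operator $\mathcal A_j^0$ is pure transport with the boundary conditions above, including the nonlocal boundary feedback $\eta(1)=U_{\text{nom}}$.

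Handling the nonlocal boundary term is the main obstacle. My first attempt is the standard route for hyperbolic systems with nonlocal boundary conditions (Bastin--Coron): use a weighted $L^2$ inner product with weights $A e^{-\nu x}$ on $u$, $B e^{\nu x}$ on $v$, and $C e^{\nu x}$ on $\eta$. I would choose $A,B,C$ so that the boundary terms are dissipative up to a bounded perturbation. The term $|\eta(1)|^2=|U_{\text{nom}}|^2$ is bounded by $3(\bar p^2+2\bar q^2)\norm{(u,v,\eta)}^2_{L^2}$ via Cauchy--Schwarz, so it contributes only a bounded quadratic term.

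This should give
\begin{align*}
\langle \mathcal A_j w,w\rangle_j\le \omega\norm{w}_j^2
\end{align*}
for a constant $\omega$ that depends only on the uniform parameter bounds $\underline\lambda,\bar\lambda,\underline\mu,\bar\mu,\bar\phi,\bar\sigma^\pm$ and on $\bar p,\bar q$. I would then pair this with a range condition, namely solvability of the resolvent equation, which reduces to ODEs in $x$ with a finite-rank boundary constraint. Lumer--Phillips then yields a semigroup with $\norm{e^{\mathcal A_j t}}\le M e^{\omega t}$. The weighted norms are mode-dependent only through $\lambda_j,\mu_j$, and they are uniformly equivalent to the $L^2$ norm across the finitely many modes, which gives a uniform constant $M$.

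If the direct dissipativity argument is awkward, the fallback is to construct the solution by the method of characteristics as a fixed point. The boundary datum $U_{\text{nom}}$ enters as a Volterra-type feedback, and the fixed point can be obtained on short time steps of uniform length and then iterated.

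The next step is concatenation. Take the state at $t_k^-$ as the initial condition on $[t_k,t_{k+1})$; the state space $H$ is the same in every mode, so this is well defined. This produces a unique mild solution $(u,v,\eta)\in C([0,\infty);H)$ for almost every path, and it satisfies the pathwise bound
\begin{align*}
\norm{(u,v,\eta)(\cdot,t)}_{L^2}^2\le M^{2(N(t)+1)}e^{2\omega t}\norm{(u^0,v^0,\eta^0)}_{L^2}^2,
\end{align*}
where $N(t)$ is the number of jumps in $[0,t]$. Uniqueness holds pathwise on each interval by semigroup uniqueness, and measurability in $\omega$ follows from the construction.

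For the expectation, $N(t)$ is stochastically dominated by a Poisson variable with rate $\tau^\star$, so $\E[M^{2N(t)}]\le e^{\tau^\star t(M^2-1)}<\infty$. Taking expectations in the pathwise bound then gives \eqref{eq:wellposedness}. If the weighted norm can be chosen to be mode-independent, for example with weights built from $\underline\lambda$ and $\bar\mu$, then $M=1$ and the last step becomes trivial.
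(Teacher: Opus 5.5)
Your proposal is correct and has the same skeleton as the paper's proof. Both freeze the mode between jumps, obtain a $C_0$-semigroup for each frozen closed-loop operator with constants uniform over the finitely many modes, and concatenate without reset at the jump times. Both then bound $\E[M^{2N(t)}]$ by Poisson domination of the jump count.

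Where you differ is the generation step. The paper only asserts it (``a linear hyperbolic system with bounded linear boundary feedback''), whereas you actually prove it. Your key observation is right: in the energy identity the nonlocal value $\eta(1)=\mathcal K W$ enters only as the incoming-boundary term of $\eta$, so it is bounded by $\|\mathcal K\|^2\|W\|_{L^2}^2$. The terms $u(0)^2=\phi_j^2v(0)^2$ and $v(1)^2=\eta(0)^2$ are absorbed by the outgoing terms once $B\mu_j\ge A\lambda_j\phi_j^2$ and $C\ge DB\mu_j\e^{\nu}$.

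The range condition, which you only sketch, closes easily. Shooting from $x=0$ with unknown $v(0)$ reduces the resolvent problem to one scalar linear equation. Injectivity of $s-\mathcal A_j$ for $s>\omega$, which follows from your dissipativity estimate, forces the coefficient of that equation to be nonzero.

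Your final remark is a genuine simplification over the paper. Impose the two inequalities uniformly, e.g.\ $B\ge A\bar\lambda\bar\phi^2/\underline\mu$ and $C\ge DB\bar\mu\e^{\nu}$; the relevant bounds are $\bar\lambda,\bar\phi,\underline\mu,\bar\mu$, not $\underline\lambda$ as you wrote. This yields one weighted norm in which every mode's semigroup is a quasi-contraction. Then no factor accumulates at jumps, the bound on $\|W(t)\|_{L^2}^2$ becomes deterministic, and the Poisson moment estimate is not needed. This is the same common-functional mechanism the paper later uses in Lemma~\ref{lem:generator}, with essentially the same weight conditions as $\gamma_\beta>0$ and $\gamma_z>0$.
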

\begin{proof}
Let \(\mathcal H:=L^2([0,1])^3\) and write $W(t)=(u(\cdot,t),v(\cdot,t),\eta(\cdot,t))^\top\in\mathcal H$.
The nominal  {delay-compensating} controller is implemented with the input $U(t)=\mathcal K W(t)$, where $\mathcal K W = \int_0^1 p(1-\xi)\eta(\xi)\,\dd \xi + \int_0^1 q_1(1,\xi)u(\xi)\,\dd \xi + \int_0^1 q_2(1,\xi)v(\xi)\,\dd \xi$.
Since \(p,q_1,q_2\) are bounded, \(\mathcal K:\mathcal H\to\mathbb R\) is a bounded linear functional. For each fixed mode \(j\in\mathcal R\), freeze the Markov mode at
\(\theta(t)=j\). The corresponding closed-loop operator
\(\mathcal A_j\) is given by 
\begin{align}
    \mathcal A_j
\begin{pmatrix}
u\\ v\\ \eta
\end{pmatrix} = \begin{pmatrix}
-\lambda_j u_x+\sigma_j^+ v\\
\mu_j v_x+\sigma_j^- u\\
D^{-1}\eta_x
\end{pmatrix},
\end{align}
with domain $D(\mathcal A_j)=
\{(u,v,\eta)\in \mathcal{H}:
u(0)=\phi_j v(0),\
v(1)=\eta(0),\
\eta(1)=\mathcal K W\}$.
This is a linear hyperbolic system with bounded linear
boundary feedback. Therefore, each
\(\mathcal A_j\) generates a \(C_0\)-semigroup
\(\mathcal M_j(t)\) on \(\mathcal H\). Since the mode set
\(\mathcal R\) is finite, there exist constants \(M_s\ge 1\) and
\(\omega_s\in\mathbb R\), independent of \(j\), such that
\begin{align}
    \|\mathcal M_j(t)\|_{\mathcal L(\mathcal H)}
\le M_s e^{\omega_s t}, t\ge 0, j\in\mathcal R .
\end{align}
Let \(0=T_0<T_1<T_2<\cdots\) be the jump times of \(\theta(t)\).
Because the total transition rate is uniformly bounded by
\(\tau^\star\), the Markov chain is non-explosive, in particular, for
every \(T>0\), the number of jumps
\begin{align}
    N_T:=\#\{k\geq 1: T_k\le T\}
\end{align}
is finite and is dominated by a Poisson random variable with parameter \(\tau^\star T\). For a fixed sample path and for \(t\in[T_k,T_{k+1})\), define the
solution recursively by
\begin{align}
    W(t)=\mathcal M_{\theta(T_k)}(t-T_k)W(T_k),
\end{align}
with the assumption that the state is not reset at jump times, i.e., \(W(T_k)=W(T_k^-)\). Since \(N_T<\infty\) almost surely on each compact interval \([0,T]\), this concatenation gives a unique pathwise solution on \([0,T]\). Uniqueness follows from the uniqueness of the semigroup solution on each inter-jump interval and the continuity of the state at the jump times. Moreover, for any \(t\in[0,T]\),
\begin{align}
    \|W(t)\|_{\mathcal H} \le M_s^{N_T+1}e^{\omega_s T}\|W(0)\|_{\mathcal H}.
\end{align}
Hence
\begin{align}
    \mathbb E\|W(t)\|_{\mathcal H}^2
\le M_s^2e^{2\omega_s T}
\mathbb E\!\left[M_s^{2N_T}\right] \|W(0)\|_{\mathcal H}^2
<\infty.
\end{align}
Since \(T>0\) is arbitrary, the closed-loop system is well posed for all \(t\ge0\) and satisfies~\eqref{eq:wellposedness}. This completes the proof of Lemma~\ref{lem:wellposedness}.
\end{proof}
\section{Stochastic target system}
\label{sec:StochasticSystem}
We now consider the original stochastic system with the nominal backstepping transformation. For notational simplicity, we write
\begin{align}
    Y(x,t)=(\alpha(x,t),\beta(x,t),z(x,t)).
\end{align}
Under the nominal backstepping transformation, the stochastic target system at the joint mode $\delta(t) = \delta_j$ is given by
\begin{align}
    \partial_t\alpha(x,t)+\lambda_j\partial_x\alpha(x,t)&=\mathcal F_j(x,t), \label{eq:tarsto1}\\
  \partial_t\beta(x,t)-\mu_j\partial_x\beta(x,t)&=\mathcal G_j(x,t), \label{eq:tarsto2}\\
  \alpha(0,t)&=\phi_j\beta(0,t), \label{eq:tar_bc1}\\
  \beta(1,t)&=z(0,t), \label{eq:tar_bc2}\\
  D\partial_t z(x,t)&=\partial_x z(x,t)+\mathcal H_j(x,t), \label{eq:tarsto3}\\
  z(1,t)&=0, \label{eq:tar_bc3}
\end{align}
where 
\begin{align}
  \mathcal F_j(x,t)
  &= f_{1j}\,v(x,t) + f_{2j}(x)\,\beta(0,t) \notag\\
  &\quad + \int_0^x \bigl[ f_{3j}(x,\xi)\,u(\xi,t) + f_{4j}(x,\xi)\,v(\xi,t) \bigr]\,\dd \xi, \label{eq:Fj}\\
  \mathcal G_j(x,t)=& g_{1j}u(x,t)+g_{2j}(x)\beta(0,t) \nonumber\\
  &+\int_0^x g_{3j}(x,\xi)u(\xi,t)+g_{4j}(x,\xi)v(\xi,t)\,\dd\xi, \label{eq:Gj}\\
  \mathcal H_j(x,t)=& h_{1j}(x)\eta(0,t)+h_{2j}(x)\beta(0,t) \nonumber\\
  &+\int_0^1 h_{3j}(x,\xi)u(\xi,t)+h_{4j}(x,\xi)v(\xi,t)\,\dd\xi. \label{eq:Hj}
\end{align}
The perturbation coefficients are
\begin{align}
 f_{1j}&=\sigma^+_j-\sigma^+_0\frac{\lambda_j+\mu_j}{\lambda_0+\mu_0}, \label{eq:f1}\\
 f_{2j}(x)&=\left(\mu_j-\frac{\lambda_j\phi_j\mu_0}{\lambda_0\phi_0}\right)K^{uv}(x,0), \label{eq:f2}\\
 f_{3j}(x,\xi)&=\left(\frac{\lambda_j}{\lambda_0}\sigma^-_0-\sigma^-_j\right)K^{uv}(x,\xi), \label{eq:f3}\\
 f_{4j}(x,\xi)&=(\lambda_0-\lambda_j)\partial_xK^{uv}(x,\xi)
 +(\mu_j-\mu_0)\partial_\xi K^{uv}(x,\xi) \nonumber\\
 &\quad -(\sigma^+_j-\sigma^+_0)K^{uu}(x,\xi), \label{eq:f4}
 \end{align}
\begin{align}
 g_{1j}&=\sigma^-_j-\frac{\lambda_j+\mu_j}{\lambda_0+\mu_0}\sigma^-_0, \label{eq:g1}\\
 g_{2j}(x)&=\left(-\lambda_j\phi_j+\mu_j\frac{\lambda_0\phi_0}{\mu_0}\right)K^{vu}(x,0), \label{eq:g2}\\
 g_{3j}(x,\xi)&=(\mu_j-\mu_0)\partial_xK^{vu}(x,\xi)
 -(\lambda_j-\lambda_0)\partial_\xi K^{vu}(x,\xi) \nonumber\\
 &\quad -(\sigma^-_j-\sigma^-_0)K^{vv}(x,\xi), \label{eq:g3}\\
 g_{4j}(x,\xi)&=\left(\frac{\sigma^+_0\mu_j}{\mu_0}-\sigma^+_j\right)K^{vu}(x,\xi), \label{eq:g4}
 \end{align}
\begin{align}
 h_{1j}(x)&=p(x)-D\mu_jq_2(x,1), \label{eq:h1}\\
 h_{2j}(x)&=D\mu_jq_2(x,0)-D\lambda_j\phi_jq_1(x,0), \label{eq:h2}\\
 h_{3j}(x,\xi)&=D(\lambda_0-\lambda_j)\partial_\xi q_1(x,\xi)+D(\sigma^-_0-\sigma^-_j)q_2(x,\xi), \label{eq:h3}\\
 h_{4j}(x,\xi)&=D(\mu_0-\mu_j)\partial_\xi q_2(x,\xi)+D(\sigma^+_0-\sigma^+_j)q_1(x,\xi). \label{eq:h4}
\end{align}
All these perturbation coefficients vanish when $\delta_j=\delta_0$.

 {
\begin{lemma}\label{boundf}
    Assume that the kernel bounds in Proposition~\ref{thm:kernelbound} hold, Then there exists a constant \(M_0>0\), depending only on the admissible parameter bounds, the nominal vector \(\delta_0\), the delay \(D\), and the kernel bounds \(\bar K,\bar K_1,\bar q,\bar q_\xi\), such that for every joint mode \(j\in\Rcal\), we have the following bound
    \begin{align}
    &|f_{1j}|+|f_{2j}(x)|+|f_{3j}(x,\xi)|+|f_{4j}(x,\xi)|
    \le M_0\Delta_j, \label{eq:f_bound_group}\\
    &|g_{1j}|+|g_{2j}(x)|+|g_{3j}(x,\xi)|+|g_{4j}(x,\xi)|
    \le M_0\Delta_j, \label{eq:g_bound_group}\\
    &|h_{1j}(x)|+|h_{2j}(x)|+|h_{3j}(x,\xi)|+|h_{4j}(x,\xi)|
    \le M_0\Delta_j. \label{eq:h_bound_group}
    \end{align}
    Consequently, each individual coefficient in \eqref{eq:f1}--\eqref{eq:h4} is bounded by \(M_0\Delta_j\), after increasing \(M_0\) if necessary.
\end{lemma}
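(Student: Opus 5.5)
The plan is a coefficient-by-coefficient check. Each expression in \eqref{eq:f1}--\eqref{eq:h4} is written as a bounded kernel factor times a scalar prefactor. We then show that each prefactor is a Lipschitz function of $\delta_j$ that vanishes at $\delta_j=\delta_0$. The Lipschitz constants must be uniform over the compact admissible box $[\underline\lambda,\bar\lambda]\times[\underline\mu,\bar\mu]\times[\underline\sigma^+,\bar\sigma^+]\times[\underline\sigma^-,\bar\sigma^-]\times[\underline\phi,\bar\phi]$. Multiplying by the kernel bounds of Proposition~\ref{thm:kernelbound} then gives a bound $C\Delta_j$ for each coefficient. Summing the four coefficients in each group gives \eqref{eq:f_bound_group}--\eqref{eq:h_bound_group} with $M_0$ equal to four times the largest such $C$.

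The linear prefactors are immediate. In $f_{4j}$, $g_{3j}$, $h_{3j}$, $h_{4j}$ the prefactors are differences $X_0-X_j$ or $X_j-X_0$ with $X\in\Scal$. Each is bounded by $\Delta_j$ from \eqref{eq:modemismatch}, and the kernel factors are bounded by $\bar K$, $\bar K_1$, $\bar q$, $\bar q_\xi$. For example,
\begin{align*}
|h_{3j}(x,\xi)|\le D\bar q_\xi\abs{\lambda_0-\lambda_j}+D\bar q\abs{\sigma^-_0-\sigma^-_j}\le D(\bar q_\xi+\bar q)\Delta_j .
\end{align*}
For $h_{1j}$ we substitute $p(x)=D\mu_0q_2(x,1)$, which gives $h_{1j}(x)=D(\mu_0-\mu_j)q_2(x,1)$ and hence $|h_{1j}|\le D\bar q\Delta_j$.

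The nonlinear prefactors are handled by adding and subtracting nominal terms. For $f_{1j}$ we write
\begin{align*}
\sigma^+_j-\sigma^+_0\tfrac{\lambda_j+\mu_j}{\lambda_0+\mu_0}=(\sigma^+_j-\sigma^+_0)+\tfrac{\sigma^+_0}{\lambda_0+\mu_0}\bigl[(\lambda_0-\lambda_j)+(\mu_0-\mu_j)\bigr],
\end{align*}
and $g_{1j}$, $f_{3j}$, $g_{4j}$ are treated the same way. Examples are $\tfrac{\lambda_j}{\lambda_0}\sigma^-_0-\sigma^-_j=\tfrac{\sigma_0^-}{\lambda_0}(\lambda_j-\lambda_0)+(\sigma^-_0-\sigma^-_j)$ and the analogous identity for $g_{4j}$. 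The resulting constants involve only $\abs{\sigma_0^\pm}$, $\lambda_0^{-1}$, $\mu_0^{-1}$ and $(\lambda_0+\mu_0)^{-1}$, all of which are finite since $\lambda_0,\mu_0>0$.

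For the products appearing in $f_{2j}$, $g_{2j}$, $h_{2j}$, we apply the telescoping identity
\begin{align*}
\lambda_j\phi_j-\lambda_0\phi_0=(\lambda_j-\lambda_0)\phi_j+\lambda_0(\phi_j-\phi_0).
\end{align*}
Then $\abs{\lambda_j\phi_j-\lambda_0\phi_0}\le\max(\abs{\underline\phi},\abs{\bar\phi},\lambda_0)\Delta_j$. For instance, $\mu_j-\tfrac{\lambda_j\phi_j\mu_0}{\lambda_0\phi_0}=(\mu_j-\mu_0)-\tfrac{\mu_0}{\lambda_0\phi_0}(\lambda_j\phi_j-\lambda_0\phi_0)$, which is where $\phi_0\ne0$ is needed. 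Similarly, $-\lambda_j\phi_j+\mu_j\tfrac{\lambda_0\phi_0}{\mu_0}=-(\lambda_j\phi_j-\lambda_0\phi_0)+\tfrac{\lambda_0\phi_0}{\mu_0}(\mu_j-\mu_0)$.

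The step I expect to be the main obstacle is $h_{2j}$, because its prefactors are not explicitly differences. I would use the kernel boundary condition $\mu_0q_2(x,0)=\phi_0\lambda_0q_1(x,0)$ to rewrite
\begin{align*}
h_{2j}(x)=D(\mu_j-\mu_0)q_2(x,0)-D(\lambda_j\phi_j-\lambda_0\phi_0)q_1(x,0),
\end{align*}
after which the telescoping bound and $\bar q$ apply.

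One caveat should be flagged. The argument relies on the displayed formulas \eqref{eq:f1}--\eqref{eq:h4} being exactly those produced by the transformation, so that each of them vanishes at $\delta_0$, as the paper asserts. Any vanishing coefficient that is a smooth function of $\delta_j$ on the compact box is Lipschitz there. Therefore, if some coefficient had a less transparent form, the mean value theorem on the (convex) box would still give a bound $C\Delta_j$.
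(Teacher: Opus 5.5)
Your proposal is correct and follows the paper's proof. The paper's key steps are: substituting $p(x)=D\mu_0q_2(x,1)$ to get $h_{1j}=D(\mu_0-\mu_j)q_2(x,1)$; using the kernel boundary condition $\mu_0q_2(x,0)=\lambda_0\phi_0q_1(x,0)$ to rewrite $h_{2j}$ as mismatch terms times $q_1(x,0),q_2(x,0)$; and bounding $h_{3j},h_{4j}$ directly via $\bar q,\bar q_\xi$. These are exactly your steps.

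Your add-and-subtract treatment of the $f_{ij},g_{ij}$ prefactors spells out what the paper only cites from prior work. Your bound $\max(|\underline\phi|,|\bar\phi|,\lambda_0)\Delta_j$ on $|\lambda_j\phi_j-\lambda_0\phi_0|$ is also slightly more careful than the paper's $\bar\phi|\lambda_j-\lambda_0|+\bar\lambda|\phi_j-\phi_0|$, which tacitly requires $\bar\phi,\bar\lambda$ to control $|\phi_0|$ or $\lambda_0$ as well.
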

}
\begin{proof}
    For the bound of the functions $f_{ij}$, $g_{ij}$, $i \in \{1,2,3,4\}$, we could find the bound of them in previous results~\cite{ZhangOperatorLearning2026,auriol2023mean}. The backstepping kernels $K^{\cdot \cdot}$ are well-defined and bounded such that the bound for $f_{ij}$ and $g_{ij}$ is obtained. Using the same method, for the function $h_{1j}(x)$, we have $h_1 = p(x) - D\mu_jq_2(x,1) = D(\mu_0-\mu_j)q_2(x,1)\leq  D \bar{q}\Delta_j$. 
    We next estimate the coefficients $h_{2j}$, $h_{3j}$, and $h_{4j}$. By the nominal boundary condition of the  {delay-compensating} kernel, $\frac{\mu_0}{\phi_0\lambda_0}q_2(x,0)=q_1(x,0)$, we have $\mu_0 q_2(x,0)-\lambda_0\phi_0 q_1(x,0)=0$. Therefore,
    \begin{align}
    h_{2j}(x)=D(\mu_j-\mu_0)q_2(x,0)
    +D(\lambda_0\phi_0-\lambda_j\phi_j)q_1(x,0).
    \end{align}
Using the uniform bound $|q_i(x,\xi)|\le \bar q$, $i=1,2$, and $|\lambda_0\phi_0-\lambda_j\phi_j|
\le
\bar\phi|\lambda_j-\lambda_0|
+\bar\lambda|\phi_j-\phi_0|$,
we obtain
\begin{align}
|h_{2j}(x)|
&\le
D\bar q|\mu_j-\mu_0|
+D\bar q|\lambda_0\phi_0-\lambda_j\phi_j| \nonumber\\
&\le
D\bar q\Big(
|\mu_j-\mu_0|
+\bar\phi|\lambda_j-\lambda_0|
+\bar\lambda|\phi_j-\phi_0|
\Big) \nonumber\\
&\le
D\bar q\max\{1,\bar\phi,\bar\lambda\}\Delta_j .
\end{align}
For $h_{3j}$, by the definition of $h_{3j}$ and the bounds
$|\partial_\xi q_1|\le \bar q_\xi$ and $|q_2|\le \bar q$, we have
\begin{align}
|h_{3j}(x,\xi)|
&=
\left|
D(\lambda_0-\lambda_j)\partial_\xi q_1(x,\xi)
+D(\sigma_0^--\sigma_j^-)q_2(x,\xi)
\right| \nonumber\\
&\le
D\bar q_\xi|\lambda_j-\lambda_0|
+D\bar q|\sigma_j^--\sigma_0^-| \nonumber\\
&\le
D\max\{\bar q_\xi,\bar q\}\Delta_j .
\end{align}
Similarly, for $h_{4j}$, using $|\partial_\xi q_2|\le \bar q_\xi$
and $|q_1|\le \bar q$, we obtain
\begin{align}
|h_{4j}(x,\xi)|
&=
\left|
D(\mu_0-\mu_j)\partial_\xi q_2(x,\xi)
+D(\sigma_0^+-\sigma_j^+)q_1(x,\xi)
\right| \nonumber\\
&\le
D\bar q_\xi|\mu_j-\mu_0|
+D\bar q|\sigma_j^+-\sigma_0^+| \nonumber\\
&\le
D\max\{\bar q_\xi,\bar q\}\Delta_j .
\end{align}
Consequently, defining $M_h:=
D\max\{
\bar q\max\{1,\bar\phi,\bar\lambda\},
\bar q_\xi,
\bar q\}$, we have, for all $j\in\Rcal$ and all admissible $(x,\xi)$, $|h_{2j}(x)|\le M_h\Delta_j$, $|h_{3j}(x,\xi)|\le M_h\Delta_j$, and $|h_{4j}(x,\xi)|\le M_h\Delta_j$. Taking $M_0$ as the maximum of these bounds for $f_i,g_i,h_i$, we obtain the desired result. This completes the proof of Lemma~\ref{boundf}.
\end{proof}
The nominal transformations \eqref{eq:plant_BS} and \eqref{eq:delay_BS} are boundedly invertible on $L^2([0,1])^3$. Hence there exist constants $0<c_T\le C_T$ such that
\begin{align}\label{eq:norm_equiv_transform}
    c_T\norm{(u,v,\eta)}_{L^2}^2
\le \norm{(\alpha,\beta,z)}_{L^2}^2
\le C_T\norm{(u,v,\eta)}_{L^2}^2. 
\end{align}

\section{Lyapunov Analysis with Mode-independent Lyapunov function}
\label{sec:LyapunovAnalysis}
To analyze the stability of the closed-loop system under the nominal  {delay-compensating} controller, we construct a mode-independent Lyapunov functional that captures the energy of the system state and quantifies the effect of parameter mismatch. The Lyapunov functional is defined as
\begin{align}
    V(Y)=\int_0^1\e^{-\rho x}\alpha^2\,\dd x
 +a\int_0^1\e^{\rho x}\beta^2\,\dd x
 +bD\int_0^1\e^{\rho_zDx}z^2\,\dd x, \label{eq:commonV}
\end{align}
where $\rho,\rho_z,a,b>0$. It is equivalent to the $L^2$ norm: there exist $0<c_V\le C_V$ such that
\begin{align}\label{eq:V_equiv}
    c_V\norm{Y}_{L^2}^2\le V(Y)\le C_V\norm{Y}_{L^2}^2. 
\end{align}
We then consider the infinitesimal generator $\cL$ of the Lyapunov candidate $V$ defined in~\eqref{eq:commonV} as~\cite{ross2014introduction}
\begin{align}
 \cL V(Y,\delta) =&\limsup _{\Delta t \rightarrow 0^{+}} \frac{1}{\Delta t} ( \mathbb{E}(V(Y(t+\Delta t)|\delta(t) = \delta_j))\nonumber\\
&-V(Y(t))).
\end{align}
The infinitesimal generator of the joint process $(Y(t),\delta(t))$ acting on the common functional at each Markov mode $j \in \{1,\dots, r\}$ where $\delta(t) = \delta_j$, can be computed as
\begin{align}
 \cL V(Y,\delta_j)=\mathcal D V(Y)[A_jY+P_jY], \label{eq:generator_common}
\end{align}
where $A_jY$ denotes the pure transport part of \eqref{eq:tarsto1}-\eqref{eq:tar_bc3}, $P_jY=(F_j,G_j,H_j/D)$ denotes the perturbation contribution, and $\mathcal D V(Y)$ is the Fr\'echet derivative with respect to the infinite-dimensional state. There is no term of the form $\sum_\ell\tau_{j\ell}(V_\ell-V_j)$ because the same $V$ is used in all modes and the state is not reset at Markov jumps. This is the main structural difference from a mode-dependent Lyapunov proof, the price paid here is that the coefficients in the boundary inequalities below must be selected uniformly over all modes. Then we have the following lemma that provides a sufficient condition for the mean-square exponential stability of the closed-loop system under the nominal  {delay-compensating} controller.
\begin{lemma}\label{lem:generator}
    Let $c_1,c_2,c_3,c_4$ be positive parameters of Young's inequality.  {Choose \(\rho,\rho_z,a,b>0\) such that}
\begin{align}
    d_\beta&:=\e^\rho,
    d_z:=\e^{\rho_zD},\\
    Q_0&:=\norm{q_1(0,\cdot)}_{L^2}^2
    +\norm{q_2(0,\cdot)}_{L^2}^2,\\
    \gamma_\beta&:=a\underline\mu-\bar\lambda\bar\phi^2 > 0,\label{eq:gammabeta}\\
    \gamma_z&:=b-a\bar\mu\e^\rho >0\label{eq:gammaz}.
\end{align}
Define
\begin{align}
M_V:=&(\frac{3+3ad_{\beta} + 2bd_z}{c_V})(1+\frac{1}{c_T}) + \frac{1}{ c_1 c_V} + \frac{ad_{\beta}}{c_V} + \frac{bd_z}{c_V c_3}\nonumber\\
&+ \frac{3bd_zc_3 Q_0}{c_Tc_V} +\frac{bd_z}{c_Vc_4}, \label{eq:MV_common}\\
M_\beta:=&c_1 + {ad_{\beta}c_2}+bd_zc_4, \label{eq:Mbeta_common}\\
M_z:=&3bd_zc_3 . \label{eq:Mz_common}
\end{align}
 {Then, for every joint mode \(j\in\Rcal\), the generator satisfies}
\begin{align}\label{eq:common_generator_combined}
\cL V(Y,\delta_j)
\le&-\bigl(\nu-M_0 M_V\Delta_j\bigr) V(Y) \nonumber\\
&-\bigl(\gamma_\beta-M_0 M_\beta\Delta_j\bigr)\beta^2(0,t) \nonumber\\
&-\bigl(\gamma_z-M_0 M_z\Delta_j\bigr)z^2(0,t),
\end{align}
where
\begin{align}\label{eq:eta0_common}
    \nu:=\min\{\rho\underline\lambda,\rho\underline\mu,\rho_z\}>0 .
\end{align}
Consequently, if
\begin{align}\label{eq:uniform_condition}
\Delta_{\max}<\eps^\star
:=\min\left\{
\frac{\nu}{2M_0 M_V},
\frac{\gamma_\beta}{2M_0 M_\beta},
\frac{\gamma_z}{2M_0 M_z}
\right\},
\end{align}
then for each fixed mode \(j\in\Rcal\), along classical solutions of the target system \eqref{eq:tarsto1}-\eqref{eq:tar_bc3}, the Lyapunov functional satisfies $\tfrac{d}{dt}V(Y(t))\le -\kappa V(Y(t))$.
Equivalently, since \(V\) is common to all modes and the state is not
reset at jumps, this inequality coincides with the generator inequality
\begin{align}\label{eq:common_gen_est}
 \cL V(Y,\delta_j)\le-\kappa V(Y),
 \qquad
 \kappa:=\nu-M_0 M_V\Delta_{\max}>0 .
\end{align}
\end{lemma}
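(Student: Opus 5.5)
The approach is to differentiate $V$ along classical solutions of the target system \eqref{eq:tarsto1}--\eqref{eq:tar_bc3} with the mode frozen at $j$, integrate the transport terms by parts, and then bound every perturbation term by Young's inequality using Lemma~\ref{boundf}. Because $V$ in \eqref{eq:commonV} is the same in every mode and the state does not jump, \eqref{eq:generator_common} applies, so $\cL V(Y,\delta_j)$ is exactly this time derivative. There is no Markov coupling term to handle. It therefore suffices to prove \eqref{eq:common_generator_combined} for a frozen mode, with constants independent of $j$.

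\textbf{Step 1 (nominal part).} Integrating by parts in the $\alpha$ term gives
\[
-\lambda_j\bigl[\e^{-\rho x}\alpha^2\bigr]_0^1-\rho\lambda_j\int_0^1 \e^{-\rho x}\alpha^2\,\dd x
\le \lambda_j\phi_j^2\beta^2(0,t)-\rho\underline\lambda\int_0^1\e^{-\rho x}\alpha^2\,\dd x .
\]
The $\beta$ term gives
\[
a\mu_j\e^{\rho}z^2(0,t)-a\mu_j\beta^2(0,t)-a\rho\mu_j\int_0^1\e^{\rho x}\beta^2\,\dd x .
\]
The $z$ term, using $z(1,t)=0$, gives $-bz^2(0,t)-b\rho_z D\int_0^1\e^{\rho_z D x}z^2\,\dd x$ after the factor $D$ cancels. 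Collecting these yields the bound
\[
-\nu V-(a\underline\mu-\bar\lambda\bar\phi^2)\beta^2(0,t)-(b-a\bar\mu\e^\rho)z^2(0,t),
\]
which is the nominal part of \eqref{eq:common_generator_combined}. Its two boundary coefficients are $\gamma_\beta$ and $\gamma_z$ from \eqref{eq:gammabeta}--\eqref{eq:gammaz}.

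\textbf{Step 2 (perturbation terms).} The perturbation contributions are
\[
2\int_0^1\e^{-\rho x}\alpha\mathcal F_j\,\dd x,\qquad 2a\int_0^1 \e^{\rho x}\beta\mathcal G_j\,\dd x,\qquad 2b\int_0^1\e^{\rho_zDx}z\mathcal H_j\,\dd x .
\]
In each, bound the weights by $1$, $d_\beta$, $d_z$ and the coefficients by $M_0\Delta_j$ (Lemma~\ref{boundf}).

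The terms containing $u$, $v$ (pointwise or under $\int_0^x$ or $\int_0^1$) are handled by Cauchy--Schwarz and $2xy\le x^2+y^2$. They are then converted to $\norm{Y}^2$ via \eqref{eq:norm_equiv_transform}, i.e.\ $\norm{(u,v)}^2\le c_T^{-1}\norm{Y}^2$, and to $V$ via \eqref{eq:V_equiv}. This accounts for the $(1+1/c_T)/c_V$ part of $M_V$.

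The boundary terms are treated separately. For $\beta(0,t)$ in $\mathcal F_j$, $\mathcal G_j$, $\mathcal H_j$, use Young's inequality with $c_1$, $c_2$, $c_4$. This produces $M_0\Delta_j(c_1+ad_\beta c_2+bd_zc_4)\beta^2(0,t)$, i.e.\ $M_\beta$, plus $V$-terms with weights $1/c_1$, $ad_\beta$ and $bd_z/c_4$.

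For the $\eta(0,t)$ term in $\mathcal H_j$, express it through $z(0,t)$. By \eqref{eq:delay_BS} at $x=0$,
\[
\eta(0,t)=z(0,t)+\int_0^1 q_1(0,\xi)u\,\dd\xi+\int_0^1 q_2(0,\xi)v\,\dd\xi ,
\]
so
\[
\eta^2(0,t)\le 3z^2(0,t)+3Q_0 c_T^{-1}\norm{Y}^2 .
\]
Applying Young's inequality with $c_3$ gives the $z^2(0,t)$ coefficient $3bd_zc_3=M_z$ and the $Q_0$-term in $M_V$.

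Summing all contributions gives \eqref{eq:common_generator_combined} with $M_V$, $M_\beta$, $M_z$ as defined. The exact numerical factors (the $3$'s and $2$'s) only need to dominate the Young splittings, so I would not fuss over their sharpness.

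\textbf{Step 3 (conclusion).} Under \eqref{eq:uniform_condition}, each boundary coefficient satisfies $\gamma_\beta-M_0M_\beta\Delta_j\ge\gamma_\beta/2>0$, and likewise for $\gamma_z$, so both boundary terms can be dropped. Also $\nu-M_0M_V\Delta_j\ge\nu-M_0M_V\Delta_{\max}=\kappa\ge\nu/2>0$. This gives $\tfrac{d}{dt}V\le-\kappa V$ in every mode, and hence \eqref{eq:common_gen_est}.

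\textbf{Main obstacle.} The delicate point is the $\eta(0,t)$ term in $\mathcal H_j$. It is not controlled by $V$ or by the $z^2(0,t)$ term directly, and it must be rewritten via the inverse-type relation at $x=0$ so that it is absorbed by $\gamma_z$. A secondary point is that $\mathcal F_j$, $\mathcal G_j$, $\mathcal H_j$ are written in the original variables $u,v$. Their $L^2$ norms must therefore pass through \eqref{eq:norm_equiv_transform}, which needs the bounded invertibility of the transformation; this is asserted earlier and I take it as given.
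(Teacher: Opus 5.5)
Your proposal is correct and follows the paper's proof essentially step for step. It integrates the pure transport part by parts to obtain $-\nu V$ and the $\gamma_\beta,\gamma_z$ boundary terms, applies Young's inequality with $c_1,\dots,c_4$ together with Lemma~\ref{boundf} and the equivalences \eqref{eq:norm_equiv_transform}--\eqref{eq:V_equiv}, and rewrites $\eta(0,t)$ through \eqref{eq:delay_BS} at $x=0$ so that it is absorbed by $\gamma_z$. There is one bookkeeping slip, inherited from the printed $M_V$: Young's inequality with $c_2$ on the $g_{2j}\beta(0,t)$ term yields the $V$-weight $ad_\beta/(c_2c_V)$, not $ad_\beta/c_V$ (the paper's own proof derives $ad_\beta/(c_Vc_2)$), so $M_V$ should carry that factor $1/c_2$ unless one restricts to $c_2\ge 1$.
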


\begin{proof}
    The Fr\'echet derivative of \eqref{eq:commonV} in direction $\Psi=(\psi_\alpha,\psi_\beta,\psi_z)$ is
\begin{align}
 \mathcal D V(Y)[\Psi]
 &=2\int_0^1\e^{-\rho x}\alpha\psi_\alpha\,\dd x
 +2a\int_0^1\e^{\rho x}\beta\psi_\beta\,\dd x \nonumber\\
 &+2bD\int_0^1\e^{\rho_zDx}z\psi_z\,\dd x. \label{eq:Frechet}
\end{align}
Substituting first only the transport terms in \eqref{eq:tarsto1}-\eqref{eq:tar_bc3} and integrating by parts yields
\begin{align}
 \mathcal D V(Y)[A_jY]
 &=-\rho\lambda_j\int_0^1 \e^{-\rho x}\alpha^2\,\dd x
   -a\rho\mu_j\int_0^1\e^{\rho x}\beta^2\,\dd x \nonumber\\
 &-\rho_zbD\int_0^1\e^{\rho_zDx}z^2\,\dd x
   -\lambda_j\e^{-\rho}\alpha^2(1,t) \nonumber\\
 &+(\lambda_j\phi_j^2-a\mu_j)\beta^2(0,t)
   +(a\mu_j\e^\rho-b)z^2(0,t). \label{eq:pure_derivative}
\end{align}
Then we have 
\begin{align}
 \mathcal D V(Y)[A_jY]
 &\le-\nu V(Y) + (\lambda_j\phi_j^2-a\mu_j)\beta^2(0,t) \nonumber\\
 &+ (a\mu_j\e^\rho-b) z^2(0,t)-\lambda_j\e^{-\rho}\alpha^2(1,t), \label{eq:pure_bound}
\end{align}
where $\nu:=\min\{\rho\underline\lambda,\rho\underline\mu,\rho_z\}$.

Then substituting the perturbation terms and using similar estimates, we have
    \begin{align}
        \mathcal D V(Y)[P_jY]=&2\int_0^1{\e^{-\rho x}}\alpha(x)\mathcal F_j(x)\,\dd x\nonumber\\
        &+2a\int_0^1 \e^{\rho x}\beta(x)\mathcal G_j(x)\,\dd x \nonumber\\
        &+2 b \int_0^1 \e^{\rho_z Dx}z(x)\mathcal H_j(x)\,\dd x. \label{eq:Ij}
    \end{align}
    Next, we will bound the term $\mathcal D V(Y)[P_jY]$ using the bounds on $f_{ij}$, $g_{ij}$, and $h_{ij}$ from Lemma~\ref{boundf}. For the first term in $\mathcal D V(Y)[P_jY]$, using Young's inequality, we have \(2\int_0^1 \abs{\e^{-\rho x}\alpha(x)f_{1j} v(x,t)\,\dd x} \leq \frac{M_0\Delta_j}{c_V}\left(1+ \frac{1}{c_T}\right)V(Y)\). Then for the second term in $\mathcal{F}_j(x)$, we have \(2\int_0^1 \abs{\e^{-\rho x}\alpha(x)f_{2j}(x)\beta(0,t)\,\dd x}\leq \frac{M_0\Delta_j}{c_V c_1}V(Y) + c_1M_0\Delta_j\beta^2(0,t)\). For the third term in $\mathcal{F}_j(x)$, we have \(2\int_0^1 \abs{\e^{-\rho x}\alpha(x)\int_0^x f_{3j}(x,\xi)u(\xi,t)\,d\xi\,\dd x}\leq \frac{M_0\Delta_j}{c_V}\left(1+\frac{1}{c_T}\right)V(Y)\). For the fourth term in $\mathcal{F}_j(x)$, we have \(2\int_0^1 \abs{\e^{-\rho x}\alpha(x)\int_0^x f_{4j}(x,\xi)v(\xi,t)\,d\xi\,\dd x}\leq \frac{M_0\Delta_j}{c_V}\left(1+\frac{1}{c_T}\right)V(Y)\). 
    
    For the terms in $\mathcal{G}_j(x)$, using the same method, we have \(2a\int_0^1 \abs{\e^{\rho x }\beta(x)g_{1j}u(x,t)\,\dd x}\leq \frac{a d_{\beta} M_0\Delta_j}{c_V} (1+\frac{1}{c_T})V(Y)\),
    where $d_{\beta}$ is the upper bound of the term $\e^{\rho x}, x\in[0,1]$.
    For the second term in $\mathcal{G}_j(x)$, we have \(2a\int_0^1 \abs{\e^{\rho x}\beta(x)g_{2j}(x)\beta(0,t)\,\dd x}\leq \frac{ad_{\beta} M_0\Delta_j}{c_V c_2}V(Y) + {ad_{\beta} c_2M_0\Delta_j}\beta^2(0,t)\). For the third term in $\mathcal{G}_j(x)$, we have \(2a\int_0^1 \abs{\e^{\rho x}\beta(x)\int_0^x g_{3j}(x,\xi)u(\xi,t)\,d\xi\,\dd x}\leq \frac{ad_{\beta}M_0\Delta_j}{c_V} (1+\frac{1}{c_T})V(Y)\). For the fourth term in $\mathcal{G}_j(x)$, we have \(2a\int_0^1 \abs{\e^{\rho x}\beta(x)\int_0^x g_{4j}(x,\xi)v(\xi,t)\,d\xi\,\dd x}\leq \frac{ad_{\beta}M_0\Delta_j}{c_V} (1+\frac{1}{c_T})V(Y)\). 
    
    For the first term in $\mathcal{H}_j(s)$, we obtain \(2b\int_0^1 \abs{\e^{\rho Dx}z(x)h_{1j}(x)\eta(0,t)\,\dd x}\leq \frac{bd_z M_0\Delta_j}{c_V c_3} V(Y) + bd_z c_3M_0\Delta_j \eta^2(0,t)\),
    where $d_z$ is an upper bound of the term $\e^{\rho Dx},x\in[0,1]$. Using the nominal backstepping transformation for the delay state, evaluating at $x=0$, we have $\eta(0,t) = z(0,t) + \int_0^1 q_1(0,\xi)u(\xi,t) + q_2(0,\xi)v(\xi,t)\,\dd\xi$. Next, we can bound the term $\eta^2(0,t)$ by $3z^2(0,t) + 3\left(\int_0^1 q_1(0,\xi)u(\xi,t)\,\dd\xi\right)^2 + 3\left(\int_0^1 q_2(0,\xi)v(\xi,t)\,\dd\xi\right)^2$. Thus the term involving $\eta^2(0,t)$ can be bounded by $bd_z c_3 M_0\Delta_j\eta^2(0,t)\leq 3bd_z c_3 M_0\Delta_j z^2(0,t)+ \frac{3bd_z c_3 Q_0}{c_T c_V}M_0\Delta_j V(Y)$.
    Finally, we obtain the bound for the first term in $\mathcal{H}_j(s)$ as \(2b\int_0^1 \abs{\e^{\rho Dx}z(x)h_{1j}(x)\eta(0,t)\,\dd x}\leq \left(\frac{bd_z}{c_V c_3}+\frac{3bd_z c_3 Q_0}{c_T c_V}\right) M_0\Delta_j V(Y)+ 3bd_z c_3 M_0\Delta_j z^2(0,t)\). For the second term in $\mathcal{H}_j(s)$, we have \(2b\int_0^1 \abs{\e^{\rho Dx}z(x)h_{2j}(x)\beta(0,t)\,\dd x}\leq \frac{bd_z}{c_V c_4}M_0\Delta_j V(Y) + bd_z c_4 M_0\Delta_j\beta^2(0,t)\),
    where $c_4$ is an arbitrary positive constant. 
    For the last two terms in $\mathcal{H}_j(s)$, we obtain the bound for $h_{3j}(x,\xi)$ as \(2b\int_0^1 \abs{\e^{\rho Dx}z(x)\int_0^1 h_{3j}(x,\xi)u(\xi,t)\,d\xi\,\dd x}\leq \frac{bd_z}{c_V}M_0\Delta_j(1+\frac{1}{c_T})V(Y)\). For the fourth term in $\mathcal{H}_j(s)$, we have \(2b\int_0^1 \abs{\e^{\rho Dx}z(x)\int_0^1 h_{4j}(x,\xi)v(\xi,t)\,d\xi\,\dd x}\leq \frac{bd_z}{c_V}M_0\Delta_j(1+\frac{1}{c_T})V(Y)\).
    Therefore, we have the following result for the generator,
    \begin{align}
        &\mathcal{D} V_j(Y)\big[\mathcal A_jY + \mathcal{P}_j Y\big] \nonumber\\
        &\leq -(\nu - M_0 M_V\Delta_j) V(Y) -\lambda_j\e^{-\rho}\alpha^2(1,t) \nonumber\\
        &-(a\mu_j- \lambda_j\phi_j^2 - M_0 M_\beta\Delta_j)\beta^2(0,t)\nonumber\\
        &-\big(b - a\mu_je^\rho - M_0 M_z\Delta_j\big)z^2(0,t),
    \end{align}
     {Since \(\lambda_j\le\bar\lambda\), \(|\phi_j|\le\bar\phi\), and \(\mu_j\ge\underline\mu\), we have
    }
     {\begin{align}
        a\mu_j-\lambda_j\phi_j^2-M_0M_\beta\Delta_j
        &\ge \gamma_\beta-M_0M_\beta\Delta_j,\\
        b-a\mu_j\e^\rho-M_0M_z\Delta_j
        &\ge \gamma_z-M_0M_z\Delta_j.
    \end{align}
    }
     {If \eqref{eq:uniform_condition} holds, then
    \begin{align}
        \nu-M_0M_V\Delta_j&\ge \nu-M_0M_V\Delta_{\max}=\kappa>0,\\
        \gamma_\beta-M_0M_\beta\Delta_j&\ge \gamma_\beta/2>0,\\
        \gamma_z-M_0M_z\Delta_j&\ge \gamma_z/2>0.
    \end{align}
    }
    This completes the proof of Lemma~\ref{lem:generator}.
\end{proof}
 {
\begin{remark}
The parameter condition in \eqref{eq:gammabeta}, \eqref{eq:gammaz} are feasible and practical to check. Indeed, one may first choose \(\rho>0\) and \(a>\bar\lambda\bar\phi^2/\underline\mu\), which gives \(\gamma_\beta=a\underline\mu-\bar\lambda\bar\phi^2>0\). Then \(b>a\bar\mu\e^\rho\) can be selected so that \(\gamma_z=b-a\bar\mu\e^\rho>0\). 
However, the small-mismatch condition in \eqref{eq:uniform_condition} is more conservative and less direct to verify because it depends on the constants \(M_V\), \(M_\beta\), and \(M_z\) that are generated by the Young-inequality estimates and the norm-equivalence constants. In practice, one may directly select \(\gamma_\beta\) and \(\gamma_z\) to be sufficiently large by tuning the parameters \(\rho\), \(a\), and \(b\) as described above, which gives a more direct design step for ensuring a positive decay rate \(\kappa\) in \eqref{eq:common_gen_est}. Then the small-mismatch condition in \eqref{eq:uniform_condition} can be used as a theoretical margin for the admissible mismatch.
\end{remark}
}

\begin{theorem}\label{thm:main}
    Consider the stochastic closed-loop system~\eqref{eq:stosys1}-\eqref{eq:stosys4} under the nominal  {delay-compensating} controller \eqref{eq:nomcontrol}. Suppose that the conditions of Lemma~\ref{lem:generator} hold and that the admissible Markov modes satisfy the uniform small-mismatch condition~\eqref{eq:uniform_condition}. Then the closed-loop system is pathwise exponentially stable such that for every initial condition $W^0:=(u^0,v^0,\eta^0)\in L^2([0,1])^3$, and every initial Markov mode \(\theta(0)=j^0\in\mathcal R\), the corresponding solution satisfies 
    \begin{align}
        V(Y(t,\omega))\le e^{-\kappa t}V(Y(0)), t\ge 0,
    \end{align}
    for almost every sample path \(\omega\). Consequently,
    \begin{align}
        \|W(\cdot,t,\omega)\|_{L^2}^2
    \le M e^{-\kappa t} \|W^0\|_{L^2}^2,
    \end{align}
    for almost every sample path \(\omega\), where $M:=\frac{C_VC_T}{c_Vc_T}$. In particular, the closed-loop system is mean-square exponentially stable
    \begin{align}
        \mathbb E_{(W^0,j^0)}
    \left[
    \|W(\cdot, t)\|_{L^2}^2
    \right]
    \le
    M e^{-\kappa t}
    \|W^0\|_{L^2}^2 .
    \end{align}
\end{theorem}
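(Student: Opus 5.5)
Fix a sample path $\omega$ in the full-measure event on which the jump times $0=T_0<T_1<T_2<\cdots$ of $\theta$ are locally finite. By Lemma~\ref{lem:wellposedness}, this event has probability one. On it, the closed-loop state $W(t)$ is obtained by concatenating the semigroups $\mathcal M_{\theta(T_k)}$, and it is continuous in $\mathcal H$ across the jump times.

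Set $Y(t)=(\alpha,\beta,z)(\cdot,t)$, the image of $W(t)$ under the nominal transformations \eqref{eq:plant_BS}--\eqref{eq:delay_BS}. Since these transformations do not depend on the mode, $Y$ is continuous at each $T_k$. On each interval $[T_k,T_{k+1})$ the mode is frozen at $j=\theta(T_k)$, and $Y$ solves the target system \eqref{eq:tarsto1}--\eqref{eq:tar_bc3} with that fixed $j$.

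\textbf{Step 1 (decay between jumps).} I first treat initial data in the domain $D(\mathcal A_j)$, where solutions are classical. There Lemma~\ref{lem:generator} applies under \eqref{eq:uniform_condition} and gives $\frac{d}{dt}V(Y(t))\le-\kappa V(Y(t))$. Grönwall then yields $V(Y(t))\le e^{-\kappa(t-T_k)}V(Y(T_k))$ for $t\in[T_k,T_{k+1})$.

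To pass to general $L^2$ data on the interval, I approximate $W(T_k)$ by domain elements $W_n\to W(T_k)$. The semigroup $\mathcal M_j$ is continuous on $\mathcal H$, the transformation is bounded by \eqref{eq:norm_equiv_transform}, and $V$ is continuous on $L^2$ by \eqref{eq:V_equiv}. Hence the inequality passes to the limit for every $t$ in the interval.

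\textbf{Main obstacle.} Compatibility at the jump times is the delicate point. At $T_k$, the state $W(T_k)$ lies in $D(\mathcal A_{\theta(T_{k-1})})$ but generally not in $D(\mathcal A_{\theta(T_k)})$. The reason is that the condition $u(0)=\phi_j v(0)$ changes with the mode. This is why I re-approximate separately on each interval rather than attempting a global classical solution; the density argument above handles each interval independently.

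\textbf{Step 2 (chaining across jumps).} Continuity of $Y$ at $T_k$ and $V$ gives $V(Y(T_k))=\lim_{t\uparrow T_k}V(Y(t))\le e^{-\kappa(T_k-T_{k-1})}V(Y(T_{k-1}))$. The key point is that $V$ is mode-independent, so no jump term of the form $V_\ell-V_j$ appears. Induction on $k$ then gives $V(Y(t))\le e^{-\kappa t}V(Y(0))$ for all $t\ge0$. Because $N_T<\infty$ almost surely for every $T$, every $t$ lies in some inter-jump interval, and the bound holds almost surely.

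\textbf{Step 3 (return to the original state).} Using \eqref{eq:V_equiv} and \eqref{eq:norm_equiv_transform},
\begin{align*}
\|W(t)\|^2\le\frac{1}{c_T}\|Y(t)\|^2\le\frac{1}{c_Tc_V}V(Y(t))\le\frac{e^{-\kappa t}}{c_Tc_V}V(Y(0))\le\frac{C_VC_T}{c_Vc_T}e^{-\kappa t}\|W^0\|^2 .
\end{align*}
This is exactly the claimed bound with $M$. The right-hand side is deterministic, so taking the conditional expectation given $(W^0,j^0)$ preserves it. The expectation is finite by Lemma~\ref{lem:wellposedness}, and this yields mean-square exponential stability.
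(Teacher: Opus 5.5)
Your proposal is correct and follows the same route as the paper. Both arguments run a pathwise Gr\"onwall estimate on each inter-jump interval via Lemma~\ref{lem:generator}, use continuity of $Y$ and of the mode-independent $V$ at jump times (nominal transformation, no state reset) to concatenate, and then apply the norm equivalences \eqref{eq:norm_equiv_transform}, \eqref{eq:V_equiv} and take the expectation of a deterministic bound. Your per-interval density argument, motivated by the observation that $u(0)=\phi_j v(0)$ changes with the mode so classical solutions need not persist across jumps, supplies the extension from classical to $L^2$ data that the paper announces (``We first prove the result for classical solutions'') but never carries out.
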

\begin{proof}
    We first prove the result for classical solutions. Fix a sample path
\(\omega\) outside the null set on which the Markov chain is explosive.
Let $0=T_0<T_1<T_2<\cdots$ denote the jump times of \(\theta(t,\omega)\). By the non-explosion of the finite-state Markov process, only finite jumps occur on every time interval. On each interval \([T_k,T_{k+1})\), the active
mode is constant. Denote this mode by \(j_k\), namely $\theta(t,\omega)=j_k, t\in [T_k,T_{k+1})$. Along this inter-jump interval, the stochastic target system reduces to
the deterministic target system associated with mode \(j_k\). By Lemma~\ref{lem:generator}
and the small-mismatch condition~\eqref{eq:uniform_condition}, we have
\begin{align}
    \frac{d}{dt}V(Y(t,\omega)) \le -\bigl(\nu-M_0M_V\Delta_{j_k}\bigr)V(Y(t,\omega))
\end{align}
up to non-positive boundary terms. Since $\Delta_{j_k}\le \Delta_{\max}$, it follows that
\begin{align}
    \frac{d}{dt}V(Y(t,\omega)) \le -\kappa V(Y(t,\omega)), t\in [T_k,T_{k+1}).
\end{align}
Hence, by Gronwall's inequality,
\begin{align}
    V(Y(t,\omega)) \le \e^{-\kappa(t-T_k)}V(Y(T_k,\omega)), t\in [T_k,T_{k+1}).
\end{align}

At a jump time \(T_k\), the Markov mode changes but the PDE state is not reset. Moreover, the backstepping transformation used here is the nominal one and is independent of the Markov mode. Therefore, $Y(T_k,\omega)=Y(T_k^-,\omega)$, and consequently $V(Y(T_k,\omega))=V(Y(T_k^-,\omega))$. Concatenating the above estimate over all inter-jump intervals gives
\begin{align}
    V(Y(t,\omega))\le \e^{-\kappa t}V(Y(0)), t\ge 0,
\end{align}
for almost every sample path \(\omega\). Using the norm equivalences~\eqref{eq:norm_equiv_transform} and~\eqref{eq:V_equiv},
we obtain
\begin{align}
    \|W(\cdot,t,\omega)\|_{L^2}^2 
    \le \frac{1}{c_Vc_T}e^{-\kappa t}V(Y(0)) \le M \e^{-\kappa t} \|W^0\|_{L^2}^2 
\end{align}
This proves the pathwise estimate with $M=\tfrac{C_VC_T}{c_Vc_T}$. Since the above estimate holds for almost every sample path and the right-hand side is deterministic for fixed \(W^0\) and \(j^0\), taking conditional expectations gives
\begin{align}
    \mathbb E_{(W^0,j^0)}
    \left[\|W(\cdot,t)\|_{L^2}^2 \right] \le M \e^{-\kappa t}
    \|W^0\|_{L^2}^2 .
\end{align}
This finishes the proof of Theorem~\ref{thm:main}.
\end{proof}

\begin{remark}
    Compared to~\cite{auriol2023mean,ZhangOperatorLearning2026}, the main contribution of Theorem~\ref{thm:main} is to show that the nominal  {delay-compensating} controller stabilizes the original stochastic system with Markov-jumping parameters and input delay. The key technical novelty is to construct a common Lyapunov functional that is independent of the Markov mode, which allows us to avoid the jump term in the generator and obtain a more direct stability estimate. The price paid for this approach is that the coefficients in the boundary inequalities must be selected uniformly over all modes, which leads to a more conservative stability condition compared to a mode-dependent Lyapunov functional. However, this approach also provides a more unified and simpler analysis framework for systems with Markov-jumping parameters.
\end{remark}

\section{Simulation}
\label{sec:simulation}
In the simulation, we use the joint Markov chain formulation in Section~\ref{sec:problemstatement} with \(r=3\). The Markov-jumping parameter vector \(\delta(t)\) takes values in \(\{\delta_1,\delta_2,\delta_3\}\), where $\delta_1 = (\lambda_1=0.85, \mu_1=0.85, \sigma^-_1=0.85, \sigma^+_1=0.85,\phi_1=0.85)$, $\delta_2 = (\lambda_2=1, \mu_2=1, \sigma^-_2=1, \sigma^+_2=1,\phi_2=1)$, $\delta_3 = (\lambda_3=1.15, \mu_3=1.15, \sigma^-_3=1.15, \sigma^+_3=1.15,\phi_3=1.15)$
The second mode \(\delta_2\) is selected as the nominal vector
\(\delta_0\), and the input delay is \(D=3\,{\rm s}\).  {Hence \(\Delta_1=\Delta_3=5\times0.15=0.75\), \(\Delta_2=0\), and \(\Delta_{\max}=0.75\).} The initial probabilities are set as $(0.32,0.36,0.32)$, the transition rates $\tau_{ij}$ are defined as
\begin{align}
    \tau_{i j}(t)= \begin{cases}0, \quad \text { if } i=j \\ 
        20, \quad \text { if } i=1,j\in\{2,3\} \\ 
        10, \quad  \text { if } i\in\{2,3\},j =1\\ 
        10+20 \cos (0.01(i+3j) t)^2, \text{others}  \end{cases}
\end{align}
Using the above settings, we numerically solve the Kolmogorov equation and obtain the results of the probability of each mode in the whole time period, as shown in Figure~\ref{fig:proandmode}(a). We also simulate a sample path of the Markov mode $\delta(t)$, which is shown in Figure~\ref{fig:proandmode}(b). 
\begin{figure}[htbp]
    \centering
    \subfloat[Probability]{\includegraphics[width=0.45\linewidth]{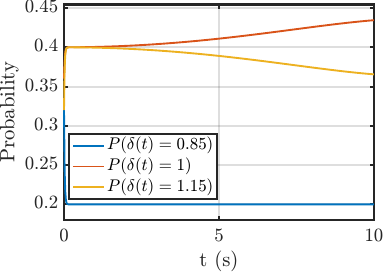}}
    \subfloat[Mode path]{\includegraphics[width=0.45\linewidth]{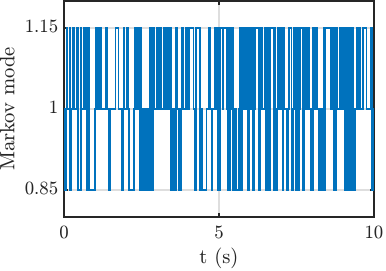}}
    \caption{The Kolmogorov forward equation probabilities and a sample Markov mode path}
    \label{fig:proandmode}
\end{figure}
\begin{figure}[htbp]
    \centering
    \subfloat[Norm comparison]{\includegraphics[width=0.46\linewidth]{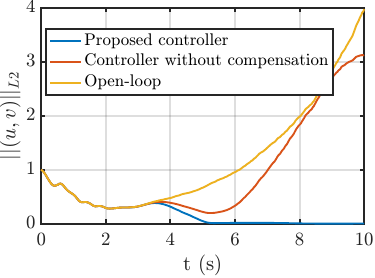}}
    \subfloat[Control input]{\includegraphics[width=0.47\linewidth]{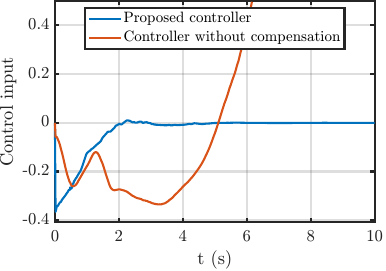}}
    \caption{Comparison of state norm and control input}
    \label{fig:comparison-normcontrol}
\end{figure}
\begin{figure}[htbp]
    \centering
    \subfloat[State norm]{\includegraphics[width=0.48\linewidth]{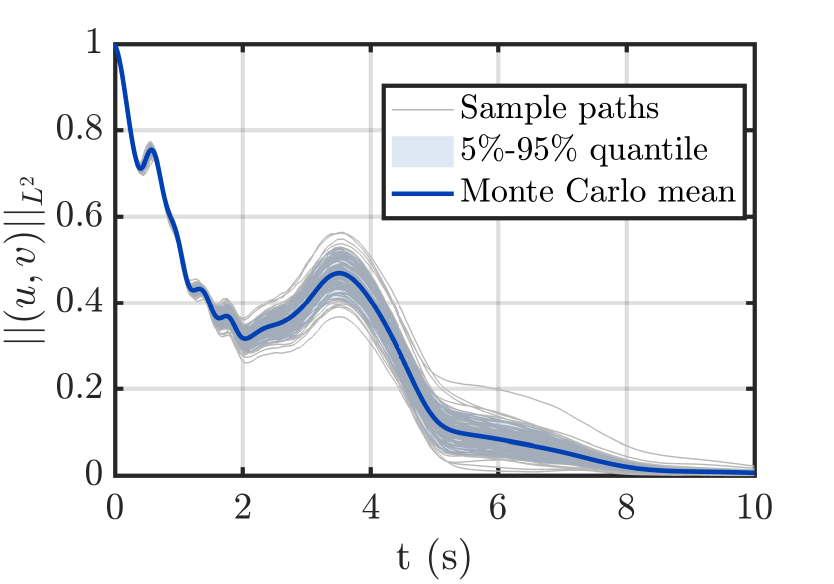}}
    \subfloat[Total norm with exponential decay]{\includegraphics[width=0.48\linewidth]{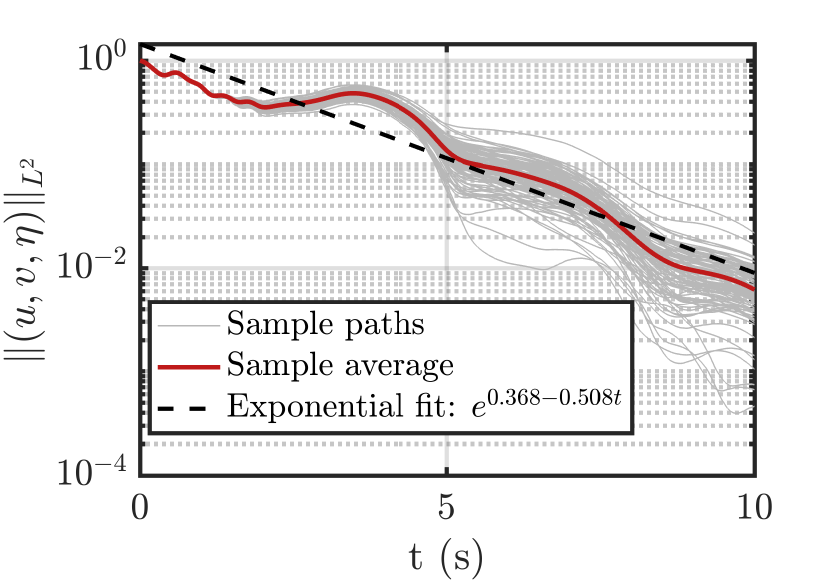}}
    \caption{State norm and the exponential decay for 100 Monte Carlo runs}
    \label{fig:monte-carlo-norm}
\end{figure}
The initial condition of the system is set to $u^0(x) = \sin(2\pi x)$, $v^0(x) = \cos(2\pi x)$,  {and \(\eta^0(x)=0\)}.  {The comparison controller without delay compensation is the nominal backstepping boundary controller $U_{\rm nd}(t)=\int_0^1K^{vu}(1,\xi)u(\xi,t)\,\dd\xi
+\int_0^1K^{vv}(1,\xi)v(\xi,t)\,\dd\xi$, applied directly to the delayed plant}.
 {For the Lyapunov parameters, one admissible boundary-weight selection is \(\rho=\rho_z=0.1\), \(a=2\), and \(b=3\). With \(\underline\mu=0.85\), \(\bar\lambda=\bar\mu=\bar\phi=1.15\), this gives $\gamma_\beta=2\times(0.85)-1.15\times(1.15)^2\approx0.179>0$, $\gamma_z=3-2(1.15)e^{0.1}\approx0.458>0$.
These values verify the positivity of the nominal boundary dissipation terms. The full sufficient condition \eqref{eq:uniform_condition}, however, further requires computable upper bounds for \(M_0\), \(M_V\), \(M_\beta\), \(M_z\), \(c_T\), and \(c_V\). Because these constants depend on kernels and inverse-transformation bounds, the present numerical example is reported as an empirical illustration unless the right-hand side \(\eps^\star\) in \eqref{eq:uniform_condition} is evaluated from the computed kernels.}
Figure~\ref{fig:comparison-normcontrol}(a) shows the evolution of the $L^2$ norm of the state $(u(\cdot,t), v(\cdot,t))$ over time for one sample path.  {The nominal delay-compensating controller stabilizes the system, whereas the open-loop response and the controller without compensation exhibit substantially growth}. Figure~\ref{fig:comparison-normcontrol}(b) shows the comparison of the controller without compensation and the delay-compensating controller over time. Figure~\ref{fig:monte-carlo-norm}(a) shows the $L^2$ norm of the state $(u(\cdot,t), v(\cdot,t))$ for 100 Monte Carlo runs, which further confirms the stability of the closed-loop system under the nominal  {delay-compensating} controller across different sample paths. Figure~\ref{fig:monte-carlo-norm}(b) shows the mean value of the total $L^2$ norm of the state across the 100 Monte Carlo runs in semi-logarithmic scale.  {The fitted empirical decay rate is \(0.508\). This empirical rate is included only to summarize the numerical trend, it is not claimed to coincide with the conservative theoretical rate \(\kappa\) in Theorem~\ref{thm:main}.}

\section{Conclusion}
\label{sec:conclusion}
This paper has invesgated the robust stabilization of \(2\times2\) linear Markov-jumping hyperbolic PDEs with boundary input delay. A nominal  {delay-compensating} backstepping controller was designed and applied to the Markov-jumping system. A common mode-independent Lyapunov functional was constructed to derive a uniform small-mismatch condition. The condition guarantees the mean-square exponential stability of the closed-loop system. Future work will focus on extending the design to output-feedback and learning-based control schemes.



\bibliographystyle{abbrv}
\bibliography{reference}

\end{document}